\documentclass[journal]{IEEEtran}

\usepackage{amsmath,amssymb,amsfonts,amsthm}
\usepackage{mathtools}

\newtheorem{Lemma}{Lemma}
\newtheorem{Proposition}{Proposition}
\newtheorem{Remark}{Remark}
\usepackage{tikz}
\usetikzlibrary{arrows.meta,positioning,fit,calc}

\usepackage{graphicx}
\usepackage{epstopdf}
\usepackage{pgfplots}
\pgfplotsset{compat=1.18}

\usepackage{float}
\usepackage{stfloats} 

\ifCLASSOPTIONcompsoc
  \usepackage[caption=false,font=normalsize,labelfont=sf,textfont=sf]{subfig}
\else
  \usepackage[caption=false,font=footnotesize]{subfig}
\fi

\usepackage[ruled,vlined,linesnumbered]{algorithm2e}

\usepackage{booktabs}
\usepackage{tabularx}
\usepackage{multirow}
\usepackage{array}
\usepackage{diagbox}

\usepackage{tikz}
\usetikzlibrary{arrows,shapes,positioning}

\usepackage{enumitem}

\usepackage{bbm}
\usepackage{dsfont}

\usepackage{cite}

\usepackage{xcolor}

\usepackage[hidelinks]{hyperref}

\newcommand{\RN}[1]{\textup{\uppercase\expandafter{\romannumeral#1}}}
\begin{document}

\title{Robust SAC-Enabled UAV-RIS Assisted Secure MISO Systems With Untrusted EH Receivers
}

\author{Hamid~Reza~Hashempour, Le-Nam Tran, \textit{Senior Member, IEEE}, Duy H. N. Nguyen, \textit{Senior Member, IEEE} and 
Hien~Quoc~Ngo, \textit{Fellow, IEEE}
	\thanks{

Hamid~Reza~Hashempour and Hien~Quoc~Ngo are with the Center for Wireless Innovation (CWI), Queen’s University Belfast, BT3 9DT Belfast, U.K., (Email:\{h.hashempoor, hien.ngo\}@qub.ac.uk).

Le-Nam Tran is with the School of Electrical and Electronic Engineering,
University College Dublin, Belfield, Dublin 4, D04 V1W8, Ireland (Email:
nam.tran@ucd.ie)

Duy H. N. Nguyen is with the department of Electrical and Computer Engineering, San
Diego State University, San Diego, CA 92182, USA.
(Email: duy.nguyen@sdsu.edu).
	}}

\markboth{}%
{Shell \MakeLowercase{\textit{et al.}}: Bare Demo of IEEEtran.cls for IEEE Journals}

\setlength{\textfloatsep}{8.5pt plus 2pt minus 2pt}

\maketitle


\begin{abstract}
Secure downlink transmission in UAV-assisted reconfigurable intelligent surface (RIS)-enabled multiuser MISO systems is challenging due to imperfect channel state information (CSI), untrusted energy-harvesting receivers (UEHRs), and the strong coupling among UAV deployment, transmit power control, and RIS configuration. In this paper, we study a secure UAV-assisted RIS-enabled multiuser MISO system with UEHRs, where a hovering UAV-mounted RIS is jointly optimized in terms of its location, transmit power allocation, and discrete RIS phase shifts. The objective is to maximize the worst-case secrecy energy efficiency (WCSEE) under imperfect CSI and practical discrete phase-shift constraints. The resulting problem is highly nonconvex due to the fractional objective, coupled design variables, discrete phase shifts, and CSI uncertainty. To address these challenges, we propose two complementary approaches. First, a block coordinate descent (BCD) framework combined with successive convex approximation (SCA) is developed to solve a secrecy energy efficiency (SEE) formulation, serving as a structured model-based benchmark. Second, for the more general WCSEE problem, we propose a tailored soft actor--critic (SAC) framework that captures the coupling among variables and avoids repeated iterative optimization. Simulation results show that the proposed SAC method consistently outperforms conventional optimization and deep reinforcement learning (DRL)-based benchmarks, including deep deterministic policy gradient (DDPG) and twin delayed deep deterministic policy gradient (TD3), while maintaining robustness to CSI uncertainty and stable performance across system configurations.
\end{abstract}
\begin{IEEEkeywords}
UAV-assisted RIS, physical-layer security, SWIPT, secrecy energy efficiency, successive convex approximation (SCA), soft actor--critic (SAC), deep reinforcement learning.
\end{IEEEkeywords}

\IEEEpeerreviewmaketitle
\vspace{-1em}

\section{Introduction}\label{intro}

\IEEEPARstart{U}{nmanned} Aerial Vehicle (UAV)-assisted reconfigurable intelligent surface (RIS) systems have recently emerged as a promising solution for enhancing wireless communication performance in beyond-5G networks\cite{Survey}. By combining flexible UAV deployment with programmable electromagnetic wave manipulation, UAV-mounted RIS enables adaptive control of wireless propagation, improving coverage, reliability, and energy efficiency.

In parallel, simultaneous wireless information and power transfer (SWIPT) has attracted increasing attention for enabling energy-efficient communication systems. RIS-assisted SWIPT designs have been widely studied to enhance energy harvesting performance \cite{Zhang,Mohammadi}. However, when energy-harvesting (EH) receivers are untrusted, they may also act as potential eavesdroppers, leading to critical physical-layer security (PLS) challenges \cite{Ouyang,Hashempour}. This issue is further aggravated under imperfect channel state information (CSI), which significantly complicates secure resource allocation.
To address these challenges, conventional optimization approaches such as successive convex approximation (SCA) and alternating optimization have been widely applied in RIS-assisted systems. Nevertheless, these methods typically incur high computational complexity and limited scalability when handling high-dimensional coupled variables.

Recently, joint optimization of RIS configuration and UAV deployment has gained increasing attention~\cite{Placement-3D,STAR-RIS-NOMA,IRS-CUAVN,Singh,JCSC-UAV}. For example, Misbah et al.~\cite{Placement-3D} investigate joint RIS phase and UAV placement optimization for sum-rate maximization using heuristic and alternating optimization methods. Similarly, joint UAV location, power allocation, and simultaneously transmitting and reflecting (STAR)-RIS beamforming design was investigated in~\cite{STAR-RIS-NOMA}, while~\cite{IRS-CUAVN} and~\cite{JCSC-UAV} consider joint UAV location optimization with precoding, sensing, or beamforming variables using SCA-based techniques. In addition, \cite{Singh} considers joint UAV/RIS position optimization in a RIS-assisted system under imperfect CSI using block coordinate descent (BCD) and iterative optimization methods. These works focus on optimizing the location of a hovering UAV rather than dynamic trajectory design, which is more suitable for practical deployment scenarios.

In addition, deep reinforcement learning (DRL) has recently been introduced to address complex resource allocation problems in RIS-assisted systems. For instance, DRL-based designs have been explored for secure communications, energy efficiency optimization, and beamforming control in RIS-assisted and related wireless systems \cite{Niyato,LuoActiveRIS,ZhangRSMA,RazaqNearField}, demonstrating improved adaptability and scalability compared to conventional optimization methods.

Despite these advances, existing works typically focus on either SWIPT, security, or UAV deployment individually, and often assume idealized system models. The joint design of secure SWIPT systems with untrusted EH receivers (UEHRs) under imperfect CSI, together with UAV-assisted RIS deployment, remains largely underexplored. Moreover, existing approaches predominantly rely on conventional optimization methods, which require iterative solving for each channel realization.

Motivated by this gap, we study a secure downlink multiuser multiple-input single-output (MISO) system in which a multi-antenna base station (BS) serves multiple information-harvesting receivers (IHRs) in the presence of non-colluding UEHRs. A UAV-mounted RIS assists transmission via passive beamforming, where the location of a hovering UAV is jointly optimized to enhance communication performance, while zero-forcing (ZF) precoding is employed at the BS to mitigate multiuser interference. The CSI of UEHRs is modeled by a norm-bounded uncertainty set \cite{Hashempour}, leading to a worst-case secrecy energy efficiency (WCSEE) maximization problem subject to transmit power, EH, and UAV location constraints. The resulting problem is highly nonconvex due to coupled channel--geometry relationships and robustness requirements.

To address this challenge, we propose two complementary solution approaches. First, a BCD framework combined with SCA is developed to solve a tractable secrecy energy efficiency (SEE) formulation, serving as a structured model-based benchmark. This approach enables efficient joint optimization of UAV location, RIS phase shifts, and transmit power, and exhibits fast convergence within a limited number of iterations. Second, a customized soft actor--critic (SAC) framework is introduced to address the more general WCSEE problem, efficiently handling the high-dimensional coupling among variables through learning-based control. Compared to conventional methods, the SAC approach shifts the computational burden to offline training and enables scalable and low-complexity deployment.

The main contributions of this paper are summarized as follows:
\begin{itemize}
\item We propose a secure UAV--RIS-assisted multiuser MISO SWIPT system with UEHRs and imperfect CSI, and formulate a WCSEE maximization problem.
\item We develop a BCD--SCA-based optimization framework that enables efficient joint optimization of UAV location, RIS phase shifts, and transmit power, providing fast convergence and  competitive SEE performance compared with learning-based benchmark schemes.
\item We propose a tailored SAC-based DRL approach for the more general WCSEE problem with discrete RIS phase shifts, which effectively captures the coupling among variables and avoids repeated iterative optimization.

\item Simulation results show that the proposed SAC approach achieves up to $28\%$ improvement over the SCA-based benchmark and about $16\%$ over deep deterministic policy gradient (DDPG), while also outperforming twin delayed deep deterministic policy gradient (TD3). Moreover, SCA becomes comparable to DDPG at high transmit power or large RIS sizes, whereas SAC consistently achieves the best performance across all scenarios.
\end{itemize}
Table~\ref{tab:comparison} compares the proposed approach with existing works, highlighting its distinguishing features. Unlike prior studies, the proposed framework jointly considers UAV location optimization, RIS design, SWIPT,  and both SEE and WCSEE within a unified setting.
\begin{table}[t]

\caption{Comparison with Related Works}
\label{tab:comparison}
\centering
\scriptsize
\setlength{\tabcolsep}{2.8pt}
\renewcommand{\arraystretch}{1.0}
\begin{tabular}{lccccccccc}
\toprule
\textbf{Feature} & \textbf{Ours} 
& \cite{Placement-3D,STAR-RIS-NOMA,IRS-CUAVN,Singh} 
& \cite{JCSC-UAV}
& \cite{Zhang,Mohammadi} 
& \cite{Ouyang} 
& \cite{Hashempour}
& \cite{LuoActiveRIS}
& \cite{Niyato}
& \cite{ZhangRSMA} \\
\midrule
UAV location optimization & \checkmark & \checkmark & \checkmark &  &  &  &  &  &  \\
RIS/STAR-RIS design       & \checkmark & \checkmark &  & \checkmark &  & \checkmark & \checkmark & \checkmark & \checkmark \\
SWIPT                     & \checkmark &  &  & \checkmark & \checkmark & \checkmark &  &  &  \\
Physical-layer security   & \checkmark &  &  &  & \checkmark & \checkmark &  & \checkmark &  \\
Conventional optimization & \checkmark & \checkmark & \checkmark & \checkmark & \checkmark & \checkmark & \checkmark &  &  \\
DRL-based optimization    & \checkmark &  &  &  &  &  & \checkmark & \checkmark & \checkmark \\
Energy efficiency/SEE     & \checkmark &  &  &  & \checkmark &  &  & \checkmark & \checkmark \\
Worst-case SEE (WCSEE)    & \checkmark &  &  &  &  & \checkmark &  &  &  \\
\bottomrule
\end{tabular}
\end{table}
The remainder of this paper is organized as follows.
Section~\ref{Sys_Model} presents the system and channel models and formulates the SEE maximization problem.
Section~\ref{sec:SCA} introduces the SCA-based solution, while Section~\ref{SAC-section} describes the SAC-based general framework.
Numerical results are provided in Section~\ref{Simulat}, followed by conclusions in Section~\ref{conc}.

\textit{Notation}: We use bold lowercase/uppercase letters for vectors/matrices. The notation $(\cdot)^T$ and $(\cdot)^H$  denote the transpose operator and the conjugate transpose operator, respectively. $\mathfrak{Re}$ and $\mathfrak{Im}$ represent the real and the imaginary parts of a complex variable, respectively. Notation $\triangleq$ denotes a definition, while $\mathbb{R}^{M \times N}$ and $\mathbb{C}^{M \times N}$ denote the sets of $M \times N$ real and complex matrices, respectively.  The matrix $\mathbf{I}_N$ denotes the $N \times N$ identity matrix,
$[\mathbf{x}]_m$ is the
 $m$-th element of a
vector $\mathbf{x}$ 
and $[\mathbf{X}]_{m,n}$ is the $(m,n)$-th element of a matrix $\mathbf{X}$. The operator $\mathrm{diag}\{\cdot\}$ constructs a diagonal matrix from its vector argument.
\section{System Model and Problem Formulation}\label{Sys_Model}
Consider a downlink wireless communication system consisting of one UAV-assisted RIS, $J$ non-colluding UEHRs, $K$ legitimate users, and one BS. The set of $K$ IHRs and $J$ UEHRs are denoted by $\mathcal{K} = \{1, \dots, K\}$ and $\mathcal{J} = \{1, \dots, J\}$, respectively. Due to the presence of high obstacles, there exists no line-of-sight (LoS) channel between the BS and the IHRs. 
The UAV, equipped with a RIS, serves as a passive relay to assist communication between the BS and the users. The RIS is composed of $M$ reflecting elements, and the phase shift of each element can be controlled by the UAV.
The BS is equipped with $N_t$ antennas.

All UEHRs, IHRs, and the BS are located on the ground. The horizontal coordinates of user $k \in \mathcal{K}$, UEHR $j \in \mathcal{J}$, and the BS are denoted by $\mathbf{w}^I_k = [x_k^I, y_k^I]^T$, $\mathbf{w}^U_j = [x_j^U, y_j^U]^T$, and $\mathbf{w}_b = [x_b, y_b]^T$, respectively. 
The UAV-mounted RIS flies at a fixed altitude $H$ and hovers to enhance communication. The system model is illustrated in Fig.~\ref{fig1}.

\begin{figure} 
    \centering
    \includegraphics[width=.9\linewidth]{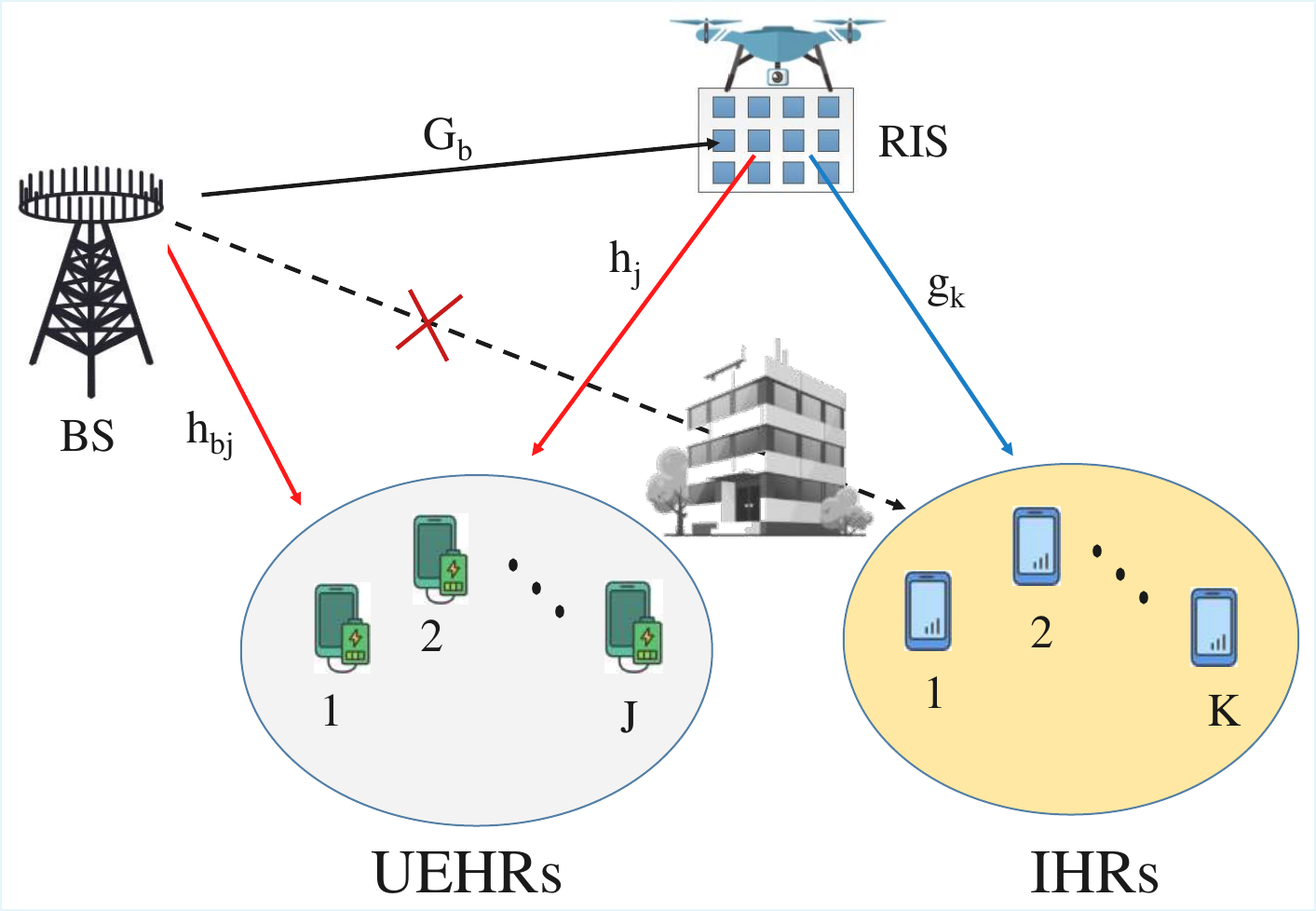}
    \caption{System model of the UAV-assisted RIS-aided secure downlink network.}
    \label{fig1}	
\end{figure}

The UAV (RIS) horizontal coordinate is denoted by $\mathbf{q} = [x_r,y_r]^T$. 
To capture realistic flight restrictions, the UAV is constrained to hover inside a given
rectangular region
\begin{equation}
  \mathcal{L} \triangleq 
  \big\{ \mathbf{q} = [x_r,y_r]^T \,\big|\,
  x_{\min}^r \le x_r \le x_{\max}^r,\, 
  y_{\min}^r \le y_r \le y_{\max}^r \big\}. \label{eq:uav_region}
\end{equation}
Thus, the UAV location must satisfy $\mathbf{q} \in \mathcal{L}$.

\subsection{Channel Model}
Let $d_b$ denote the distance between the BS and the UAV, and
$d_k$ and $d_j$ denote the distances between the UAV and user $k$ and UEHR $j$, respectively, i.e.,
\begin{align} 
    d_b &= \sqrt{\|\mathbf{q} - \mathbf{w}_b\|^2 + H^2},\label{eq:distance1}\\
    d_k &= \sqrt{\|\mathbf{q} - \mathbf{w}^I_k\|^2 + H^2},\quad k\in\mathcal{K},\label{eq:distance2}\\
    d_j &= \sqrt{\|\mathbf{q} - \mathbf{w}^U_j\|^2 + H^2},\quad j\in\mathcal{J}.\label{eq:distance3}
\end{align}
We adopt a distance-dependent path-loss model with exponent $\alpha \ge 2$ and a Rician small-scale fading model for all air-to-ground links. Specifically, the channel between the UAV-mounted RIS and user $k$ is given by $\mathbf{g}_k = \sqrt{\rho_0\, d_k^{-\alpha_k}}\,\tilde{\mathbf{g}}_k$,
and
\begin{align}
\tilde{\mathbf{g}}_k &= 
\sqrt{\frac{K_k}{K_k+1}}\;\mathbf{g}^{\mathrm{LoS}}_{k}
+ \sqrt{\frac{1}{K_k+1}}\;\hat{\mathbf{g}}_{k}, \label{eq:gk_small}
\end{align}
where $\rho_0$ is the path loss at the reference distance of $1$ meter, $\hat{\mathbf{g}}_{k} \sim \mathcal{CN}(\mathbf{0},\mathbf{I}_M)$, and $K_k$ is the Rician $K$-factor. Similarly, the UAV–UEHR $j$ channel is $\mathbf{h}_j = \sqrt{\rho_0\, d_j^{-\alpha_j}}\,
\tilde{\mathbf{h}}_j$, and
\begin{align}
\tilde{\mathbf{h}}_j = \sqrt{\frac{K_j}{K_j+1}}\,\mathbf{h}^{\mathrm{LoS}}_{j}
+ \sqrt{\frac{1}{K_j+1}}\,\hat{\mathbf{h}}_{j},
\end{align}
where $\hat{\mathbf{h}}_{j}\sim \mathcal{CN}(\mathbf{0},\mathbf{I}_M)$ and $K_j$ is the Rician factor.
The LoS components $\mathbf{g}^{\mathrm{LoS}}_{k}$ and $\mathbf{h}^{\mathrm{LoS}}_{j}$ are modeled via the RIS array steering vector.  For simplicity, we adopt a uniform linear array (ULA) model for the RIS. This assumption does not affect the generality of the proposed framework, as it can be readily extended to a uniform planar array (UPA) by incorporating the corresponding three-dimensional (3D) array response.
Specifically, assuming a ULA with inter-element spacing $\delta=\lambda/2$, where $\lambda$ is the carrier wavelength, the steering vector of the RIS towards angle $\varphi$ is
\begin{equation}
    \mathbf{a}_{\mathrm{RIS}}(\varphi)
    = \big[1, e^{-j \frac{2\pi}{\lambda}\delta \sin\varphi}, \ldots,
    e^{-j \frac{2\pi}{\lambda}\delta (M-1)\sin\varphi} \big]^T.
\end{equation}
The array is assumed parallel to the ground, so that the angular dependence of the array response is
captured through the horizontal azimuth angle $\varphi$, while the UAV
altitude $H$ only affects the large-scale path loss via the 3D distance.
Hence, $\mathbf{g}^{\mathrm{LoS}}_{k} = \mathbf{a}_{\mathrm{RIS}}(\varphi_k)$ and 
$\mathbf{h}^{\mathrm{LoS}}_{j} = \mathbf{a}_{\mathrm{RIS}}(\varphi_j)$, where 
$\varphi_k$ and $\varphi_j$ are the angles of departure from the RIS towards user $k$ and UEHR $j$, respectively, determined by $\varphi_k = \arctan\!\left(\frac{y_k - y_r}{x_k - x_r}\right)$ and $\varphi_j = \arctan\!\left(\frac{y_j - y_r}{x_j - x_r}\right)$.

We define the RIS phase shift vector as $\boldsymbol{s} = [s_1, \dots, s_M]^T$, where each element is given by $s_m = e^{j \theta_m}$ for all $m \in \{1, \dots, M\}$. 
The RIS reflection-coefficient matrix is 
\begin{equation}
\boldsymbol{\Theta}
=
\mathrm{diag}\!\left(s_1,\dots,s_M\right)
\in \mathbb{C}^{M\times M},
\label{eq:reflection_matrix}
\end{equation}
In the ideal continuous-phase model, the phase shift satisfies
$\theta_m \in [0,2\pi)$.
In practical implementations with finite-resolution phase shifters,
$\theta_m$ is restricted to a discrete set
\begin{equation}
\theta_m \in \mathcal{F}
\triangleq
\left\{
0,\,
\frac{2\pi}{2^L},\,
\frac{2\pi\cdot 2}{2^L},\,
\dots,\,
\frac{2\pi(2^L-1)}{2^L}
\right\},
\label{eq:discrete_phase}
\end{equation}
where $L$ denotes the number of quantization bits of each RIS element.
The channel matrix between the BS and the RIS is denoted by $\mathbf{G}_b \in \mathbb{C}^{M \times N_t}$, where $[\mathbf{G}_b]_{m,n}$ represents the channel coefficient between the 
$n$th transmit antenna at the BS and the 
$m$th reflecting element of the RIS. It is modeled as
$\mathbf{G}_b = \sqrt{\rho_0\, d_b^{-\alpha_b}}\tilde{\mathbf{G}}_b$,
and 
\begin{equation}
    \tilde{\mathbf{G}}_b=
\left(
\sqrt{\frac{K_b}{K_b+1}}\,\mathbf{G}_b^{\mathrm{LoS}}
+ \sqrt{\frac{1}{K_b+1}}\,\hat{\mathbf{G}}_b
\right)
\end{equation}
where $\hat{\mathbf{G}}_b \sim \mathcal{CN}(\mathbf{0},\mathbf{I}_{M\times N_t})$ and $K_b$ is the Rician $K$-factor. The LoS component is given by
\begin{equation}
\mathbf{G}_b^{\mathrm{LoS}} = \mathbf{a}_{\mathrm{RIS}}(\varphi_{BR})\,\mathbf{a}_{\mathrm{BS}}^H(\varphi_{BR}),
\end{equation}
with $\mathbf{a}_{\mathrm{BS}}(\cdot)$ denoting the BS array response, and $\varphi_{BR}$ the angle of the BS--RIS link.

As a worst-case assumption for secrecy analysis, we also consider a direct link between the BS and each UEHR. 
The corresponding channel vector between the BS and the $j$-th UEHR is denoted by $\mathbf{h}_{bj} \in \mathbb{C}^{N_t \times 1}$ for all $j \in \mathcal{J}$, where the small-scale fading follows a Rayleigh model  with
\begin{equation}
\mathbf{h}_{bj} \sim \mathcal{CN}\big(\mathbf{0}, \rho_0 d_{bj}^{-\alpha_{bj}} \mathbf{I}_{N_t}\big),
\end{equation}
and $d_{bj} = \|\mathbf{w}_b - \mathbf{w}^U_j\|$ is the BS–UEHR distance.

\subsection{Signal Model and ZF Beamforming}
The BS employs linear  ZF precoding to simultaneously serve
$K$ IHRs. Let
\[
\mathbf{H}_{\mathrm{c}}
=\big[\,\mathbf{h}_{\mathrm{c},1},\dots,\mathbf{h}_{\mathrm{c},K}\,\big]
\in\mathbb{C}^{N_t\times K}
\]
denote the combined channel matrix of BS--RIS--IHR, where
\begin{equation}
\mathbf{h}_{\mathrm{c},k}
=
\mathbf{G}_b^{H}\boldsymbol{\Theta}^{H}\mathbf{g}_k ,
\qquad k\in\mathcal{K}.
\end{equation}
The unnormalized ZF precoding matrix is obtained from the Moore--Penrose
pseudo-inverse as
\begin{equation}
\bar{\mathbf{P}}
=
\mathbf{H}_{\mathrm{c}}
\left(
\mathbf{H}_{\mathrm{c}}^{H}\mathbf{H}_{\mathrm{c}}
\right)^{-1}.
\label{eq:zf_precoder}
\end{equation}
The $k$-th column $\bar{\mathbf{p}}_k$ of $\bar{\mathbf{P}}$
satisfies
$\mathbf{h}_{\mathrm{c},i}^{H}\bar{\mathbf{p}}_k = 0$
for all $i\neq k$.
To incorporate transmit power allocation, each ZF precoder is normalized
and scaled as
\begin{equation}
\mathbf{p}_k
=
\sqrt{\tilde{p}_k}\ \hat{\mathbf{p}}_k,
\qquad k=1,\dots,K,
\label{eq:pk_final}
\end{equation}
where $\tilde{p}_k\ge0$ is the transmit power allocated to user~$k$ and $\hat{\mathbf{p}}_k$ is the corresponding normalized ZF beam direction defined as
\begin{equation}
\hat{\mathbf{p}}_k
\triangleq
\frac{\bar{\mathbf{p}}_k}
{\|\bar{\mathbf{p}}_k\|},
\qquad k=1,\dots,K.
\label{eq:zf_normalized}
\end{equation}
The transmitted signal from the BS is therefore
\begin{equation}
\mathbf{x}
=
\sum_{k=1}^{K}
\mathbf{p}_k\, \tilde s_k,
\label{eq:tx_signal_zf}
\end{equation}
where $\tilde s_k$ is a unit-power information symbol intended for IHR~$k$.
Using \eqref{eq:tx_signal_zf}, the received baseband signal at IHR~$k$ is
\begin{equation}
y_k
= \mathbf{h}_{\mathrm{c},k}^H\,\mathbf{x} + n_k,
\end{equation}
where $n_k\sim\mathcal{CN}(0,\sigma^2)$ is the AWGN. 
Following the ZF property that
$\mathbf{h}_{\mathrm{c},k}^{H}\hat{\mathbf{p}}_l = 0$ for all $\ell\neq k$, the SINR at the legitimate user $k$, obtained as
\begin{equation}
\gamma_k
=
\frac{\big|\mathbf{h}_{\mathrm{c},k}^{H}\mathbf{p}_k\big|^2}{
\sum\limits_{\ell\neq k}\big|\mathbf{h}_{\mathrm{c},k}^{H}\mathbf{p}_\ell\big|^2+\sigma^2 } =\frac{\tilde{p}_k a_k}{\sigma^2}.
\label{eq:gamma_k}
\end{equation}
where $a_k \triangleq \big|\mathbf{h}_{\mathrm{c},k}^{H}\hat{\mathbf{p}}_k\big|^2$.
The received signal at UEHR~$j$ is
\begin{equation}
y_j
=
\big(\mathbf{h}_{bj}^{H} + \mathbf{h}_j^{H}\boldsymbol{\Theta}\mathbf{G}_b\big)
\mathbf{x} + n_j.
\end{equation}
For simplicity of notations we define $\mathbf{u}_j
\triangleq
\mathbf{h}_{bj} + \mathbf{G}_b^{H}\boldsymbol{\Theta}^{H}\mathbf{h}_j$.
Then the eavesdropping SINR for decoding user $k$ is
\begin{equation}
\gamma_{j,k}
=
\frac{\big|\mathbf{u}_j^{H}\mathbf{p}_k\big|^2}{
\sum_{\ell\neq k}\big|\mathbf{u}_j^{H}\mathbf{p}_\ell\big|^2+\sigma^2}
=
\frac{\tilde{p}_k b_{j,k}}{\sum_{\ell\neq k} \tilde{p}_\ell b_{j,\ell}+\sigma^2},
\label{eq:eve_sinr_pow}
\end{equation}
where $b_{j,k}\triangleq |\mathbf{u}_j^{H}\hat{\mathbf{p}}_k|^2$.

\subsection{Worst-Case Secrecy Rate Under CSI Uncertainty}
The BS is assumed to possess PCSI for the legitimate IHR links, which
is reasonable due to their regular communication and feedback reporting.
In contrast, UEHRs do not engage in
continuous or reliable channel reporting. Their CSI 
is therefore \emph{partially outdated} and only imperfect estimates are
available at the BS.

To capture this inaccuracy, we adopt the widely used norm-bounded CSI
uncertainty model \cite{Hashempour,hashempour2,Imperfection1,Imperfection2}. Let
$\hat{\mathbf{u}}_j \in \mathbb{C}^{N_t\times 1}$
denote the estimated cascaded BS--RIS--UEHR channel for UEHR~$j$, and
let $\mathbf{u}_j$ be the true (unknown) channel. The ICSI model is
\begin{equation}
\mathbf{u}_j
=
\hat{\mathbf{u}}_j + \Delta\mathbf{u}_j,
\qquad
\Delta\mathbf{u}_j \in \mathcal{U},
\label{eq:uncertain_channel}
\end{equation}
where the uncertainty set is
\begin{equation}
\mathcal{U}
\triangleq
\Big\{
\Delta\mathbf{u}_j \in \mathbb{C}^{N_t\times 1}
:\;
\|\Delta\mathbf{u}_j\|_2 \le \nu
\Big\}.
\label{eq:uncertainty_set}
\end{equation}
Here, $\nu>0$ defines the radius of the CSI error region and reflects the
UEHR’s infrequent or unreliable feedback. When $\nu=0$, the BS has perfect
knowledge of the UEHR channel. This norm-bounded uncertainty model enables
a worst-case secrecy formulation, ensuring robustness against the strongest
possible UEHR eavesdropping behavior within the uncertainty region. 

Since $\mathbf{u}_j$ is imperfectly known, the BS must ensure robustness against the 
\emph{worst-case eavesdropping capability} within the uncertainty region. 
To account for channel uncertainty, we adopt standard norm-bounded channel uncertainty results.
For any channel model $\mathbf{h}=\hat{\mathbf{h}}+\Delta\mathbf{h}$ with 
$\|\Delta\mathbf{h}\|_2 \leq \nu$ and for any vector 
$\mathbf{p} \in \mathbb{C}^{n}$, the following bounds hold (see \cite{proof,Hashempour})
\begin{subequations}
\label{eq_norm-bounded}
\begin{align}
\max_{\|\Delta\mathbf{h}\|\le\nu} |\mathbf{h}^H \mathbf{p}| 
&= |\hat{\mathbf{h}}^H \mathbf{p}| + \nu \|\mathbf{p}\|_2, \\
\min_{\|\Delta\mathbf{h}\|\le\nu} |\mathbf{h}^H \mathbf{p}| 
&= \Big[ |\hat{\mathbf{h}}^H \mathbf{p}| - \nu \|\mathbf{p}\|_2 \Big]^+,
\end{align}
\end{subequations}
where $[x]^+ = \max(x,0)$.
Applying \eqref{eq_norm-bounded} to the numerator and denominator terms of the eavesdropping SINR in \eqref{eq:eve_sinr_pow} yields
\begin{align}
\max_{\|\Delta\mathbf{u}_j\|\le\nu} \big|\mathbf{u}_j^{H}\mathbf{p}_k\big|^{2} 
&= \Big( \big|\hat{\mathbf{u}}_j^{H}\mathbf{p}_k\big| + \nu\|\mathbf{p}_k\| \Big)^{2}, \\
\min_{\|\Delta\mathbf{u}_j\|\le\nu} \big|\mathbf{u}_j^{H}\mathbf{p}_\ell\big|^{2} 
&=
\Big(\big[|\hat{\mathbf{u}}_j^{H}\mathbf{p}_\ell|-\nu\|\mathbf{p}_\ell\|\big]^+\Big)^{2},
\quad \ell\neq k .
\end{align}

Accordingly, a conservative upper bound on the worst-case eavesdropping SINR for user $k$ at UEHR $j$ is obtained as \footnote{This conservative bound ensures robustness under channel uncertainty at the cost of performance loss.}
\begin{align}
\gamma_{E,k} 
&= \max_{j\in\mathcal{J}}
\left\{
\max_{\|\Delta\mathbf{u}_j\|\le \nu}
\hat{\gamma}_{j,k}(\Delta\mathbf{u}_j)
\right\} \nonumber \\
&\le
\max_{j\in\mathcal{J}}
\frac{
\Big( \big|\hat{\mathbf{u}}_j^{H}\mathbf{p}_k\big| + \nu\|\mathbf{p}_k\| \Big)^{2}
}{
\sum_{\ell\neq k}
\Big(
\big[|\hat{\mathbf{u}}_j^{H}\mathbf{p}_\ell|
-
\nu\|\mathbf{p}_\ell|\big]^+
\Big)^{2}
+\sigma^{2}
} \nonumber \\
&=
\max_{j\in\mathcal{J}}
\frac{\tilde{p}_k
\Big( \sqrt{\hat{b}_{j,k}}+ \nu \Big)^{2}
}{
\sum_{\ell \neq k} \tilde{p}_\ell
\Big(\big[\sqrt{\hat{b}_{j,\ell}}-\nu\big]^+\Big)^{2}
+\sigma^{2}
},
\label{eq:wc_eve_sinr}
\end{align}
where $\hat{b}_{j,k}\triangleq \big|\hat{\mathbf{u}}_j^{H}\hat{\mathbf{p}}_k\big|^2$,
$\forall j\in\mathcal{J},\, k\in\mathcal{K}$.
Given the legitimate-user SINR $\gamma_k$ and the upper bound on the worst-case eavesdropping SINR $\gamma_{E,k}$ in \eqref{eq:wc_eve_sinr}, the worst-case secrecy rate of IHR $k$ is expressed as
\begin{equation}
R_k^{\mathrm{sec}}
=
\Big[
\log_{2}(1+\gamma_k)
-
\log_{2}(1+\gamma_{E,k})
\Big]^{+}.
\label{eq:secrecy_rate}
\end{equation}
This formulation guarantees secrecy robustness against all UEHR channel realizations satisfying $\|\Delta\mathbf{u}_j\|\le\nu$.

The total received RF power at UEHR~$j$ is
\begin{equation}
P_j^{\mathrm{EH}}
=
\sum_{k=1}^{K}
\big|\mathbf{u}_j^{H}\mathbf{p}_k\big|^2,
\end{equation}
and the harvested DC power follows the logistic nonlinear model
\cite{Boshkovska}
\begin{equation}
\Omega(P_j^{\mathrm{EH}})
=
\frac{b_2}{
k'_1\big(1+\exp(-b_0(P_j^{\mathrm{EH}} - b_1))\big)} - k'_2.
\label{eq:EH_model}
\end{equation}
where $\Omega(\cdot)$ is a logistic function representing the non-linear energy harvesting model, and $k'_1$, $k'_2$, $b_0$, $b_1$, $b_2$ are positive constants.
The required EH power must satisfy the following robust constraint under channel uncertainty
\begin{align}\label{EH-c}
   \sum_{j \in \mathcal{J}} \min_{\|\Delta\mathbf{u}_j\|\le \nu} P_j^{\mathrm{EH}} \geq \Omega^{-1}(E_h),
\end{align}
where $E_h$ denotes the minimum harvested energy requirement at the set of UEHRs. The inverse function $\Omega^{-1}(x)$ is defined as
\begin{equation}
\Omega^{-1}(x) = b_1 - \dfrac{1}{b_0} \ln\left( \frac{b_2}{k'_1(x + k'_2) }- 1  \right). \label{omegainv}
\end{equation}
To obtain a tractable deterministic formulation, \eqref{EH-c} is enforced via the
following sufficient lower-bound constraint
\begin{equation}\label{P1c_closed}
\sum_{j \in \mathcal{J}} \underline{P}_j^{\mathrm{EH}}
\ \ge\ \Omega^{-1}(E_h),
\end{equation}
where
\begin{equation}\label{eq:eh_lower_closed}
\underline{P}_j^{\mathrm{EH}}
\triangleq
\sum_{k=1}^{K}
\left(\left[\,\big|\hat{\mathbf{u}}_j^{H}\mathbf{p}_k\big|
-\nu\|\mathbf{p}_k\|_2\,\right]^+\right)^{2},\qquad \forall j\in\mathcal{J}.
\end{equation}

\subsection{Problem Formulation}\label{Problem-formulation}

To guarantee confidentiality against UEHRs, the secrecy rate of each legitimate IHR is defined under worst-case eavesdropping as in \eqref{eq:secrecy_rate}.
To ensure fairness among users, we adopt the minimum secrecy rate
\begin{equation}
R^{\mathrm{sec}}
=
\min_{k\in\mathcal{K}} R_k^{\mathrm{sec}}.
\label{eq:min_sec_rate}
\end{equation}

We aim to maximize the WCSEE,
defined as the ratio between the minimum secrecy rate and the total power
consumption at the BS. The total power includes the radiated transmit power
and the circuit power consumption. The optimization variables include the
ZF power allocation, RIS phase shifts, and the UAV horizontal position.
The resulting problem is formulated as
\begin{subequations}\label{P1}
\begin{align}
\max_{\tilde{\mathbf{p}},\,\boldsymbol{\theta},\,\mathbf{q}}
\quad
&
\frac{
R^{\mathrm{sec}}
}{
\varrho \sum_{k=1}^{K}\tilde{p}_k + P_0
}
\label{P1a}
\\[0.5mm]
\text{s.t.}\quad
&
\sum_{k=1}^{K} \tilde{p}_k \le P_{\max},
\label{P1d}
\\
&
\theta_m \in \mathcal{F},\quad m=1,\dots,M,
\label{P1g}
\\
&
\mathbf{q} \in \mathcal{L},
\label{P1h}
\\&
\eqref{P1c_closed}.
\end{align}
\end{subequations}
where $\varrho$ is the reciprocal of the
drain efficiency of the power amplifier at the BS \cite{Hien}, $\tilde{\mathbf{p}}= \{ \tilde{p}_k \}_{k=1}^{K}$, $\boldsymbol{\theta} = \{ \theta_m \}_{m=1}^{M}$ and $P_0$ represents the circuit power consumption.

\begin{Remark}
Problem~\eqref{P1} is highly nonconvex due to the fractional objective, coupled UAV--RIS channel geometry, worst-case secrecy constraints, and discrete RIS phase shifts. This motivates the use of advanced optimization and learning-based approaches to handle the resulting high-dimensional problem.
\end{Remark}

\section{SCA-Based Solution}
\label{sec:SCA}

To obtain a tractable benchmark, we consider an idealized setting with perfect CSI and continuous RIS phase shifts. Under these assumptions, problem~\eqref{P1} admits a simplified formulation that can be efficiently solved via a BCD framework combined with SCA, and thus reduces to
\begin{subequations}\label{P2}
\begin{align}
\max_{\tilde{\mathbf{p}},\,\boldsymbol{\theta},\,\mathbf{q}}
\quad
&
\frac{
R^{\mathrm{sec}}
}{
\varrho \sum_{k=1}^{K}\tilde{p}_k + P_0
}
\label{P2a}
\\[0.5mm]
\text{s.t.}\quad
&
\sum_{j \in \mathcal{J}} P_j^{\mathrm{EH}} \geq \Omega^{-1}(E_h),
\label{P2c}
\\
&
\theta_m \in [0, 2\pi], \quad \forall m \in \{1, \dots, M\},
\label{P2g}
\\
&
\eqref{P1d},\eqref{P1h}.
\end{align}
\end{subequations}
Problem~\eqref{P2} is addressed using a BCD framework. Specifically, the power allocation, RIS reflection coefficients, and UAV location are optimized in an alternating manner, while the remaining variables are fixed in each subproblem.
\subsection{Power Allocation}
\label{subsec:sca_power}

The RIS configuration $\boldsymbol{\Theta}$
and the UAV location $\mathbf{q}$ are assumed fixed.
Accordingly, the BS computes the ZF beamforming directions from the
effective channel and only optimizes the power loading.
Given fixed $\boldsymbol{\Theta}$ and $\mathbf{q}$, the power allocation
benchmark problem is formulated as
\begin{subequations}\label{Ppow_frac}
\begin{align}
\max_{\tilde{\mathbf{p}}\succeq\mathbf{0}}
\quad
&
\frac{R^{\mathrm{sec}}(\tilde{\mathbf{p}})}{\varrho\sum_{k=1}^{K}\tilde{p}_k+P_0}
\label{Ppow_frac_a}
\\
\text{s.t.}\quad
&
\sum_{j\in\mathcal{J}}\sum_{k=1}^{K} \tilde{p}_k\, b_{j,k}
\ge \Omega^{-1}(E_h),
\label{Ppow_frac_b}
\\
&
\eqref{P1d}.
\label{Ppow_frac_c}
\end{align}
\end{subequations}
In \eqref{Ppow_frac_b}, we have used
\eqref{eq:pk_final}, under which the harvested RF power at UEHR~$j$ is
given by
\begin{equation}
P_j^{\mathrm{EH}}(\tilde{\mathbf{p}})
=
\sum_{k=1}^{K}\tilde{p}_k\, b_{j,k}.
\label{eq:eh_linear_pow}
\end{equation}
As a result, the EH constraint in \eqref{Ppow_frac_b} is
affine with respect to the power allocation vector $\tilde{\mathbf{p}}$.

Problem \eqref{Ppow_frac} is a nonlinear fractional program due to the
fractional objective and the nonconvex secrecy-rate term. Introducing a slack
variable $\zeta$, the secrecy constraint can be equivalently enforced for all
UEHRs as
\begin{align}
\zeta
\le\;
&\log_2\!\Big(\sigma^2+\tilde{p}_k a_k\Big)-\log_2(\sigma^2)
+\log_2\!\Big(\sigma^2+\sum_{\ell\neq k}\tilde{p}_\ell b_{j,\ell}\Big)
\nonumber\\
&-\log_2\!\Big(\sigma^2+\sum_{\ell=1}^{K}\tilde{p}_\ell b_{j,\ell}\Big),
\quad \forall k\in\mathcal{K},\forall j\in\mathcal{J}.
\label{eq:dc_sec}
\end{align}
The last term in \eqref{eq:dc_sec} is upper bounded by its first-order Taylor
expansion at $\tilde{\mathbf{p}}^{(t)}$ as
\begin{align}
\log_2\!\Big(\sigma^2+\sum_{\ell=1}^{K}\tilde{p}_\ell b_{j,\ell}\Big)
\le
\log_2\!\big(S_j^{(t)}\big)
+
\sum_{\ell=1}^{K}\eta_{j,\ell}^{(t)}
\big(\tilde{p}_\ell-\tilde{p}_\ell^{(t)}\big),
\label{eq:taylor_log}
\end{align}
where
$S_j^{(t)}=\sigma^2+\sum_{\ell=1}^{K}\tilde{p}_\ell^{(t)}b_{j,\ell}$ and
$\eta_{j,\ell}^{(t)}={b_{j,\ell}}/{(\ln(2)S_j^{(t)})}$. Substituting
\eqref{eq:taylor_log} into \eqref{eq:dc_sec} gives the concave lower bound
\begin{align}
\zeta \le \widehat{R}_{k,j}^{(t)}(\tilde{\mathbf{p}}),\quad
\forall k\in\mathcal{K},\forall j\in\mathcal{J},
\label{eq:sec_sca}
\end{align}
where
\begin{align}
&\widehat{R}_{k,j}^{(t)}(\tilde{\mathbf{p}})
\triangleq\;
\log_2\!\Big(\sigma^2+\tilde{p}_k a_k\Big)-\log_2(\sigma^2)-\log_2\!\big(S_j^{(t)}\big)
\nonumber\\ &+\log_2\!\Big(\sigma^2+\sum_{\ell\neq k}\tilde{p}_\ell b_{j,\ell}\Big)
-\sum_{\ell=1}^{K}\eta_{j,\ell}^{(t)}
\big(\tilde{p}_\ell-\tilde{p}_\ell^{(t)}\big).
\label{eq:Rhat_pow}
\end{align}

To handle the fractional objective, we apply the Charnes--Cooper transformation
\begin{equation}
\chi =
\frac{1}{\varrho\sum_{k=1}^{K}\tilde{p}_k+P_0},
\qquad
\bar{p}_k=\chi\tilde{p}_k,
\qquad
\bar{\zeta}=\chi\zeta.
\label{eq:cc_transform_pow}
\end{equation}
Thus, at SCA iteration $t$, the power-allocation subproblem \eqref{Ppow_frac} is transformed into
the following convex problem
\begin{subequations}\label{Ppow_cc_sca}
\begin{align}
\max_{\bar{\mathbf{p}}\succeq\mathbf{0},\,\bar{\zeta},\,\chi>0}
\quad
&
\bar{\zeta}
\label{Ppow_cc_sca_a}
\\
\text{s.t.}\quad
&
\varrho\sum_{k=1}^{K}\bar{p}_k+P_0\chi=1,
\label{Ppow_cc_sca_b}
\\
&
\sum_{k=1}^{K}\bar{p}_k \le P_{\max}\chi,
\label{Ppow_cc_sca_c}
\\
&
\sum_{j\in\mathcal{J}}\sum_{k=1}^{K}\bar{p}_k b_{j,k}
\ge \chi\Omega^{-1}(E_h),
\label{Ppow_cc_sca_d}
\\
&
\bar{\zeta}
\le
\chi\widehat{R}_{k,j}^{(t)}
\!\left(\frac{\bar{\mathbf{p}}}{\chi}\right),
\quad \forall k\in\mathcal{K},\forall j\in\mathcal{J}.
\label{Ppow_cc_sca_e}
\end{align}
\end{subequations}
Problem \eqref{Ppow_cc_sca} is convex since
$\chi\widehat{R}_{k,j}^{(t)}(\bar{\mathbf{p}}/\chi)$ is the perspective of the
concave lower bound in \eqref{eq:sec_sca}, and the remaining constraints are
affine. After solving \eqref{Ppow_cc_sca}, the original power allocation is
recovered as $\tilde{p}_k^\star=\bar{p}_k^\star/\chi^\star$. Unlike
Dinkelbach's method \cite{Dinkelbach}, the Charnes--Cooper transformation avoids
an additional fractional-programming loop and requires only one convex solve per
SCA iteration.
This leads to a more compact and computationally efficient power-allocation procedure, as summarized in Algorithm~\ref{Alg:pow_cc_sca}.

\begin{algorithm}[t]
\caption{Charnes--Cooper-SCA Power Allocation}
\label{Alg:pow_cc_sca}
\KwIn{$\tilde{\mathbf{p}}^{(0)}$, tolerance $\varepsilon$.}
\KwOut{$\tilde{\mathbf{p}}^\star$}
Set $t=0$ and $\Delta=\infty$\;
\Repeat{$\Delta\le\varepsilon$}{
Construct $\widehat{R}_{k,j}^{(t)}(\tilde{\mathbf{p}})$ using \eqref{eq:Rhat_pow}\;
Solve \eqref{Ppow_cc_sca} and recover
$\tilde{p}_k^{(t+1)}=\bar{p}_k^\star/\chi^\star$, $\forall k$\;
Set $\Delta=\|\tilde{\mathbf{p}}^{(t+1)}-\tilde{\mathbf{p}}^{(t)}\|_2$ and
$t\leftarrow t+1$\;
}
\Return $\tilde{\mathbf{p}}^{(t)}$.
\end{algorithm}


\subsection{RIS Reflection Phase Optimization}
\label{subsec:sca_ris}

In this subsection, we optimize the RIS reflection coefficients for fixed
ZF beamforming directions, power allocation $\tilde{\mathbf{p}}$,
and UAV location $\mathbf{q}$.
Since the transmit power consumption at the BS is fixed under given
$\tilde{\mathbf{p}}$, the WCSEE
maximization reduces to a worst-case secrecy rate maximization problem.
Accordingly, the RIS phase optimization subproblem can be formulated as
\begin{subequations}\label{Pris}
\begin{align}
\max_{\mathbf{s},\,\zeta}
\quad
& \zeta
\label{Pris_obj}
\\
\text{s.t.}\quad
& R_k^{\mathrm{sec}} \ge \zeta, \quad \forall k \in \mathcal{K},
\label{Pris_sec}
\\
& \eqref{P2c},\eqref{P2g}.
\end{align}
\end{subequations}

For any vectors $\mathbf{v}\in\mathbb{C}^{M}$ and $\mathbf{w}\in\mathbb{C}^{N_t}$,
the cascaded channel term can be rewritten as
$\mathbf{v}^H \boldsymbol{\Theta}\mathbf{G}_b \mathbf{w}
=
\mathbf{t}^H \mathbf{s}$,
where $\mathbf{t} \triangleq (\mathrm{diag}(\mathbf{v}^H)\mathbf{G}_b\mathbf{w})^*$.
Accordingly, we define
\begin{align}
\mathbf{t}_{k,\ell} &\triangleq
\big(\mathrm{diag}(\mathbf{g}_k^H)\mathbf{G}_b\mathbf{p}_{\ell}\big)^{*},
\qquad \forall k\in\mathcal{K},\ \forall \ell\in\mathcal{K},
\\
\mathbf{t}_{j,\ell} &\triangleq
\big(\mathrm{diag}(\mathbf{h}_j^H)\mathbf{G}_b\mathbf{p}_{\ell}\big)^{*},
\qquad \forall j\in\mathcal{J},\ \forall \ell\in\mathcal{K},
\\
c_{j,\ell} &\triangleq \mathbf{h}_{bj}^H\mathbf{p}_{\ell},
\qquad \forall j\in\mathcal{J},\ \forall \ell\in\mathcal{K}.
\end{align}
Then the useful and interference terms become
\begin{align}
\mathbf{h}_{\mathrm{c},k}^H\mathbf{p}_{\ell}
&= \mathbf{g}_k^H\boldsymbol{\Theta}\mathbf{G}_b\mathbf{p}_{\ell}
= \mathbf{t}_{k,\ell}^H\mathbf{s},
\\
\mathbf{u}_j^H\mathbf{p}_{\ell}
&= \mathbf{h}_{bj}^H\mathbf{p}_\ell+\mathbf{h}_j^H\boldsymbol{\Theta}\mathbf{G}_b\mathbf{p}_\ell
= c_{j,\ell}+\mathbf{t}_{j,\ell}^H\mathbf{s}.
\end{align}

Using the definitions above, the received signal powers appearing in
$\gamma_k$ and $\gamma_{j,k}$ can be expressed as quadratic forms in
$\mathbf{s}$.
The secrecy constraint \eqref{Pris_sec} is nonconvex due to the difference
of logarithmic functions.
Following the SCA principle, we introduce first-order approximations of
the nonconvex quadratic terms.
Introduce auxiliary variables $\rho_k$ and $\rho_{E,j,k}$ such that
\begin{subequations}\label{eq:aux_sinrs}
\begin{align}
\rho_k &\le \gamma_k,\qquad \forall k\in\mathcal{K}, \label{eq:aux_sinrs1}
\\
\rho_{E,j,k} &\ge \gamma_{j,k},\qquad \forall j\in\mathcal{J},\ \forall k\in\mathcal{K}.\label{eq:aux_sinrs2}
\end{align}
\end{subequations}
Also introduce $f_{E,k}$ to handle the worst UEHR,
\begin{align}
f_{E,k} &\ge \log_2(1+\rho_{E,j,k}),
\qquad \forall j\in\mathcal{J},\ \forall k\in\mathcal{K},
\label{eq:aux_eve_log}
\end{align}
\begin{Proposition}
\label{prop:ris_zf_affine}
An affine approximation of the secrecy-related constraints in \eqref{Pris_sec},
for all $k\in\mathcal{K}$ and $j\in\mathcal{J}$, is given by
\begin{subequations}\label{eq:ris_zf_affine}
\allowdisplaybreaks
\begin{align}
& \sum_{\ell\neq k} \big|\mathbf{t}_{k\ell}^H \mathbf{s}\big|^2 + \sigma^2
- \Psi^{(t)}\!\big(\mathbf{s},\rho_k;\mathbf{t}_{kk}\big) \le 0,
\quad \forall k\in\mathcal{K},
\label{eq:ris_zf_affine_a}
\\
& \big|c_{j,k} + \mathbf{t}_{j,k}^H \mathbf{s}\big|^2
\le \varTheta^{(t)}\!\big(\xi_{j,k},\rho_{E,j,k}\big),
\quad \forall j\in\mathcal{J},\ \forall k\in\mathcal{K},
\label{eq:ris_zf_affine_b}
\\
& \xi_{j,k}
\le
\sum_{\ell\neq k}\vartheta^{(t)}\!\big(c_{j,\ell},\mathbf{t}_{j,\ell};\mathbf{s}\big)
+\sigma^2,
\quad \forall j\in\mathcal{J},\ \forall k\in\mathcal{K},
\label{eq:ris_zf_affine_c}
\\
& 1 - \Gamma^{(t)}(f_{E,k}) + \rho_{E,j,k} \le 0,
\quad \forall j\in\mathcal{J},\ \forall k\in\mathcal{K},
\label{eq:ris_zf_affine_d}
\\
& \zeta \le \log_2(1+\rho_k) - f_{E,k},
\quad \forall k\in\mathcal{K}.
\label{eq:ris_zf_affine_e}
\end{align}
\end{subequations}
where $\Psi^{(t)}(\mathbf{s},\rho_k;\mathbf{t}_{kk})$ is the first-order Taylor
approximation of $\frac{\big|\mathbf{t}_{kk}^H\mathbf{s}\big|^2}{\rho_k}$,
$\Gamma^{(t)}(x)\triangleq 2^{x^{(t-1)}}\big[1+\ln(2)\,(x-x^{(t-1)})\big]$
is the first-order Taylor approximation of $2^x$, and
\begin{align}
\varTheta^{(t)}(x,y) \triangleq &
 \tfrac{1}{2} \big(x^{(t-1)}+y^{(t-1)}\big)(x+y) \nonumber \\&
-\tfrac{1}{4}\big(x^{(t-1)}+y^{(t-1)}\big)^2-\tfrac{1}{4}(x-y)^2,
\end{align}
is used to obtain an affine lower bound of the bilinear term $xy$.
Moreover, $\vartheta^{(t)}(c;\mathbf{t};\mathbf{s})$ is the first-order Taylor
approximation of $\big|c+\mathbf{t}^H\mathbf{s}\big|^2$ at iteration $t$:
\begin{equation}
\vartheta^{(t)}(c;\mathbf{t};\mathbf{s})
=
2\,\mathfrak{Re}\!\left\{
\big(c+\mathbf{t}^H\mathbf{s}^{(t-1)}\big)^{H}\mathbf{t}^H\mathbf{s}
\right\}
-\big|c+\mathbf{t}^H\mathbf{s}^{(t-1)}\big|^2.
\end{equation}
\end{Proposition}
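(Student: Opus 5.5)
The plan is to start from the epigraph reformulation of \eqref{Pris_sec} with the auxiliary variables in \eqref{eq:aux_sinrs} and \eqref{eq:aux_eve_log}. First we show that \eqref{Pris_sec} is equivalent to
\[
\zeta\le \log_2(1+\rho_k)-f_{E,k}, \qquad \rho_k\le\gamma_k, \qquad \rho_{E,j,k}\ge\gamma_{j,k}, \qquad f_{E,k}\ge\log_2(1+\rho_{E,j,k}).
\]
This follows because $\log_2(1+\cdot)$ is increasing and $f_{E,k}$ dominates $\log_2(1+\gamma_{j,k})$ for every $j$, so it dominates the maximum over $j$. At an optimum all these inequalities can be tightened, so no optimality is lost. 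The first inequality is exactly \eqref{eq:ris_zf_affine_e}, and it is already concave in $(\rho_k,f_{E,k})$.

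Next we treat each remaining inequality separately. Each one is rewritten as a difference-of-convex (or bilinear) condition, and the nonconvex part is replaced by a global bound of the correct direction, so that the resulting set is an inner approximation.

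\textbf{Legitimate SINR constraint.} Write $\gamma_k=|\mathbf{t}_{kk}^H\mathbf{s}|^2/(\sum_{\ell\ne k}|\mathbf{t}_{k\ell}^H\mathbf{s}|^2+\sigma^2)$, using the expressions derived just before the proposition. Then $\rho_k\le\gamma_k$ is equivalent to
\[
\sum_{\ell\neq k}|\mathbf{t}_{k\ell}^H\mathbf{s}|^2+\sigma^2\le |\mathbf{t}_{kk}^H\mathbf{s}|^2/\rho_k .
\]
The right-hand side is the quadratic-over-linear function, which is jointly convex in $(\mathbf{s},\rho_k)$ for $\rho_k>0$. Hence its first-order Taylor expansion $\Psi^{(t)}$ is a global under-estimator, and replacing it yields the convex constraint \eqref{eq:ris_zf_affine_a}.

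\textbf{Eavesdropper SINR constraint.} Introduce $\xi_{j,k}\le\sum_{\ell\ne k}|c_{j,\ell}+\mathbf{t}_{j,\ell}^H\mathbf{s}|^2+\sigma^2$. Then $\rho_{E,j,k}\ge\gamma_{j,k}$ is implied by $|c_{j,k}+\mathbf{t}_{j,k}^H\mathbf{s}|^2\le\xi_{j,k}\rho_{E,j,k}$. Two approximations are needed:
\begin{itemize}
\item For the definition of $\xi_{j,k}$, convexity of $|c+\mathbf{t}^H\mathbf{s}|^2$ gives $|c+\mathbf{t}^H\mathbf{s}|^2\ge\vartheta^{(t)}$. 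Replacing each summand by $\vartheta^{(t)}$ gives \eqref{eq:ris_zf_affine_c}, which implies the original definition.
\item For the bilinear term, use $xy=\tfrac14(x+y)^2-\tfrac14(x-y)^2$ and linearize the convex $(x+y)^2$ at $(x^{(t-1)},y^{(t-1)})$. This gives $xy\ge\varTheta^{(t)}(x,y)$, and hence \eqref{eq:ris_zf_affine_b}.
\end{itemize}
Here we must check the direction of each bound carefully, since $\xi$ enters with a lower bound in one place and an upper-bound role in the other. The chain is $|\cdot|^2\le\varTheta\le\xi\rho_E$, with $\xi$ below the true interference-plus-noise. Together these give $\gamma_{j,k}\le\rho_{E,j,k}$.

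\textbf{Worst-UEHR rate constraint.} The constraint $f_{E,k}\ge\log_2(1+\rho_{E,j,k})$ is equivalent to $1+\rho_{E,j,k}\le 2^{f_{E,k}}$. Since $2^x$ is convex, its tangent satisfies $\Gamma^{(t)}(f_{E,k})\le 2^{f_{E,k}}$. Therefore \eqref{eq:ris_zf_affine_d} implies the original constraint.

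\textbf{Conclusion and expected difficulty.} Collecting the pieces, every constraint in \eqref{eq:ris_zf_affine} is convex (affine or convex quadratic in $\mathbf{s}$), and it implies the corresponding exact constraint. All bounds are tight at the expansion point, which gives the usual SCA feasibility and monotonicity. I expect the main subtlety to be the eavesdropper block, specifically keeping the inequality directions consistent through the slack $\xi_{j,k}$ and the bilinear bound. A secondary point is that calling the system ``affine'' is loose, since the $|\cdot|^2$ terms remain convex quadratic; I would state the result as a convex inner approximation.
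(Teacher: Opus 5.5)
Your proposal is correct and follows the paper's appendix step by step: the DC rewriting of the legitimate SINR with $\Psi^{(t)}$, the slack pair $(\xi_{j,k},\rho_{E,j,k})$ handled by $\varTheta^{(t)}$ and $\vartheta^{(t)}$, the tangent of $2^{x}$, and the epigraph form. It also supplies justifications the paper leaves implicit: joint convexity of $|\mathbf{t}_{kk}^H\mathbf{s}|^2/\rho_k$, the identity $xy=\tfrac14(x+y)^2-\tfrac14(x-y)^2$ behind $\varTheta^{(t)}$, and the correct observation that the result is a convex rather than affine inner approximation.

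One small caveat concerns $\vartheta^{(t)}$. The displayed formula omits the constant $2\,\mathfrak{Re}\{(c+\mathbf{t}^H\mathbf{s}^{(t-1)})^{*}c\}$, so it need not be a lower bound. Your inequality $|c+\mathbf{t}^H\mathbf{s}|^2\ge\vartheta^{(t)}$ therefore holds for the genuine tangent $2\,\mathfrak{Re}\{(c+\mathbf{t}^H\mathbf{s}^{(t-1)})^{*}(c+\mathbf{t}^H\mathbf{s})\}-|c+\mathbf{t}^H\mathbf{s}^{(t-1)}|^2$, which is what both you and the paper's proof actually use. It does not in general hold for the printed expression.
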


\begin{proof}
See Appendix~\ref{app:proof_prop_ris_zf}.
\end{proof}
Constraint \eqref{P2c} is nonconvex due to the quadratic terms in $\mathbf{s}$,
and the unit-modulus constraint $|s_m|=1$ is also nonconvex.
In the following, we construct convex surrogate constraints using SCA.
Under fixed $\{\mathbf{p}_k\}$ and $\mathbf{q}$, the harvested RF power at UEHR~$j$ can be written as
\begin{equation}
P_j^{\mathrm{EH}}
=
\sum_{k=1}^{K}\big|c_{j,k}+\mathbf{t}_{j,k}^H\mathbf{s}\big|^2.
\end{equation}
Since $\big|c_{j,k}+\mathbf{t}_{j,k}^H\mathbf{s}\big|^2$ is convex in $\mathbf{s}$,
its first-order Taylor expansion $\vartheta^{(t)}(c_{j,k};\mathbf{t}_{j,k};\mathbf{s})$
provides a global affine lower bound. Hence, an SCA-safe sufficient condition for
\eqref{P2c} is
\begin{equation}
\sum_{j\in\mathcal{J}}\sum_{k=1}^{K}
\vartheta^{(t)}(c_{j,k};\mathbf{t}_{j,k};\mathbf{s})
\ge \Omega^{-1}(E_h).
\label{eq:eh_sca}
\end{equation}

In the ideal continuous-phase model, the RIS coefficients satisfy $|s_m|=1$, $\forall m$.
Following the penalty-based approach \cite{Kumar}, we relax $|s_m|=1$ to $|s_m|\le 1$ and add a penalty
term to encourage $|s_m|$ approaches $1$.
Given $\mathbf{s}^{(t-1)}$, we solve the following convex problem
\begin{subequations}\label{Pris_sca}
\begin{align}
\max_{\boldsymbol{\Xi}}
\quad
&
\zeta
+
C\!\left(
2\,\mathrm{Re}\!\left\{(\mathbf{s}^{(t-1)})^{H}\mathbf{s}\right\}
-\|\mathbf{s}^{(t-1)}\|^2
\right)
\label{Pris_sca_obj}
\\
\text{s.t.}\quad
&
\eqref{eq:ris_zf_affine_a}-\eqref{eq:ris_zf_affine_e},
\eqref{eq:eh_sca},
\label{Pris_sca_eh}
\\
&
|s_m|\le 1,\quad \forall m=1,\dots,M,
\label{Pris_sca_mod}
\end{align}
\end{subequations}
where the optimization variable set is defined as
$\boldsymbol{\Xi}
\triangleq
\big\{
\mathbf{s},\,
\zeta,\,
\boldsymbol{\rho},\,
\boldsymbol{\rho}_E,\,
\mathbf{f}_E,\,
\boldsymbol{\xi}
\big\}$ with
$\boldsymbol{\rho} \triangleq \{\rho_k\}_{k\in\mathcal{K}}$,
$\boldsymbol{\rho}_E \triangleq \{\rho_{E,j,k}\}_{j\in\mathcal{J},\,k\in\mathcal{K}}$,
$\mathbf{f}_E \triangleq \{f_{E,k}\}_{k\in\mathcal{K}}$ and
$\boldsymbol{\xi} \triangleq \{\xi_{j,k}\}_{j\in\mathcal{J},\,k\in\mathcal{K}}$.

\noindent The RIS phase optimization procedure is summarized in Algorithm~\ref{Alg:ris_sca}.

\begin{algorithm}[t]
\caption{SCA-Based RIS Phase Optimization (Given $\tilde{\mathbf{p}}$ and $\mathbf{q}$)}
\label{Alg:ris_sca}
\KwIn{Initial $\mathbf{s}^{(0)}$, tolerance $\varepsilon>0$, penalty weight $C>0$.}
\KwOut{Optimized RIS phase vector $\mathbf{s}^\star$.}
Set $t\gets 1$.\;
\Repeat{$|\zeta^{(t)}-\zeta^{(t-1)}|\leq \varepsilon$}{
Solve \eqref{Pris_sca} to obtain $(\mathbf{s}^{(t)},\zeta^{(t)})$.\;
$t\gets t+1$.\;
}
\Return $\mathbf{s}^{(t-1)}$.\;
\end{algorithm}


\subsection{UAV Location Optimization}
\label{subsec:sca_uav}
In this subsection, we optimize the UAV horizontal position $\mathbf{q}$
while keeping the BS precoders $\{\mathbf{p}_k\}_{k\in\mathcal{K}}$, and the RIS phase
matrix $\boldsymbol{\Theta}$ fixed.
Under fixed $\{\mathbf{p}_k\}$, the BS transmit power is constant; hence, the SEE
maximization reduces to maximizing the minimum secrecy rate. 
To obtain a numerically stable and low-complexity SCA benchmark, we further consider the relaxed objective of maximizing the minimum rate of IHRs instead of the secrecy rate.

Using the distance definitions in \eqref{eq:distance1}--\eqref{eq:distance3}, the
channels can be expressed as
\begin{align}
\mathbf{g}_k(\mathbf{q})
&=
\sqrt{\rho_0}\,d_k(\mathbf{q})^{-\frac{\alpha}{2}}\,\tilde{\mathbf{g}}_k,
\qquad \forall k\in\mathcal{K},\\
\mathbf{h}_j(\mathbf{q})
&=
\sqrt{\rho_0}\,d_j(\mathbf{q})^{-\frac{\alpha}{2}}\,\tilde{\mathbf{h}}_j,
\qquad \forall j\in\mathcal{J},\\
\mathbf{G}_b(\mathbf{q})
&=
\sqrt{\rho_0}\,d_b(\mathbf{q})^{-\frac{\alpha}{2}}\,\tilde{\mathbf{G}}_b .
\end{align}
where $\alpha = \alpha_{b}=\alpha_{k}=\alpha_{j}$ for notational simplicity. Define the following fixed gains that incorporate the beamforming vectors
$\{\mathbf{p}_k\}$:
\begin{align}
\tilde{A}_{k,\ell}
&\triangleq
\rho_0^2\Big|
\tilde{\mathbf{g}}_k^H \boldsymbol{\Theta}\tilde{\mathbf{G}}_b \mathbf{p}_\ell
\Big|^2,
\qquad \forall k,\ell\in\mathcal{K},\\
\tilde{D}_{j,\ell}
&\triangleq
\rho_0 \tilde{\mathbf{h}}_j^H \boldsymbol{\Theta}\tilde{\mathbf{G}}_b \mathbf{p}_\ell,
\qquad \forall j\in\mathcal{J},\ \forall \ell\in\mathcal{K}.
\end{align}
Since both the desired signal and interference powers scale with the product
$(d_k(\mathbf q)d_b(\mathbf q))^{-\alpha}$, it is convenient to decouple the
distance-dependent terms. To this end, we introduce slack variables
$\{y_{kb}\}$ and $\{y_{jb}\}$ such that
\begin{subequations}\label{eq:y_slacks_uav}
\begin{align}
d_k(\mathbf{q})\,d_b(\mathbf{q}) &\le y_{kb}^{\frac{1}{\alpha}},
\qquad \forall k\in\mathcal{K},\label{eq:ykb_uav}\\
d_j(\mathbf{q})\,d_b(\mathbf{q}) &\le y_{jb}^{\frac{1}{\alpha}},
\qquad \forall j\in\mathcal{J}.
\label{eq:yjb_uav}
\end{align}
\end{subequations}
Although the function $y^{1/\alpha}$ is concave for $\alpha \ge 2$,
the constraints in \eqref{eq:y_slacks_uav} remain nonconvex due to the bilinear
terms $d_k(\mathbf q)d_b(\mathbf q)$ and $d_j(\mathbf q)d_b(\mathbf q)$.
To construct a convex surrogate, we employ the arithmetic--geometric mean (AGM) 
inequality. 
Specifically, at iteration $t$, for any nonnegative variables 
$a$ and $b$, the product $ab$ is upper bounded as
\begin{equation}
ab \le \widetilde{\Theta}^{(t)}( a,b)
\triangleq
\frac{1}{2}\!\left(
\frac{b^{(t-1)}}{a^{(t-1)}}a^2
+
\frac{a^{(t-1)}}{b^{(t-1)}}b^2
\right),
\end{equation}
where $(a^{(t-1)},b^{(t-1)})$ denotes the point from the previous iteration. The surrogate $\widetilde{\Theta}^{(t)}( a,b)$ is convex and tight at $(a^{(t-1)},b^{(t-1)})$.
Consequently, the slack-variable constraints admit the following 
convex SCA approximation
\begin{align}
\widetilde{\Theta}^{(t)}( d_k(\mathbf q),d_b(\mathbf q))
\le y_{kb}^{\frac{1}{\alpha}},\ \forall k,
\label{eq:y_safe_bound_uav}
\\
\widetilde{\Theta}^{(t)}( d_j(\mathbf q),d_b(\mathbf q))
\le y_{jb}^{\frac{1}{\alpha}},\ \forall j.
\label{eq:y_safe_bound_uav2}
\end{align}
By multiplying both the numerator and denominator of the SINR expression by
$(d_k(\mathbf q)d_b(\mathbf q))^{\alpha}$ and using the slack variable
definition in \eqref{eq:y_slacks_uav}, the legitimate-user SINR can be
conservatively represented as
\begin{equation}\label{eq:legit_sinr_uav}
\gamma_k(\mathbf{q})
=
\frac{\tilde{A}_{k,k}}
{\sum_{\ell\neq k}\tilde{A}_{k,\ell}+\sigma^2\,y_{kb}},
\qquad \forall k\in\mathcal{K}.
\end{equation}


Introduce auxiliary SINR variables $\{\tilde \rho_k\}$ and the rate variable
$\zeta$. Using \eqref{eq:legit_sinr_uav}, the rate constraints can be equivalently
written as
\begin{subequations}\label{eq:uav_rate_aux}
\begin{align}
&\zeta \le \log_2(1+\tilde \rho_k),\qquad \forall k\in\mathcal K,
\label{eq:uav_rate_aux_a}
\\&
\tilde \rho_k\!\left(\sum_{\ell\neq k}\tilde A_{k,\ell}+\sigma^2 y_{kb}\right) \le \tilde A_{k,k},
\qquad \forall k\in\mathcal K.
\label{eq:uav_rate_aux_b}
\end{align}
\end{subequations}
The bilinear term $\tilde \rho_k y_{kb}$ in \eqref{eq:uav_rate_aux_b} is nonconvex.
At iteration $t$, it is upper bounded by the convex AGM inequality introduced in the previous subsection as
\begin{equation}\label{eq:uav_agm_majorizer}
\tilde \rho_k y_{kb}\le
\widetilde{\Theta}^{(t)}(\tilde \rho_k,y_{kb})
\end{equation}
which is convex in $(\tilde \rho_k,y_{kb})$ and tight at 
$(\tilde \rho_k^{(t-1)},y_{kb}^{(t-1)})$.
 Hence, a convex surrogate of
\eqref{eq:uav_rate_aux_b} is
\begin{equation}\label{eq:uav_rate_convex}
\tilde \rho_k\sum_{\ell\neq k}\tilde A_{k,\ell}
+\sigma^2\,\widetilde{\Theta}^{(t)}(\tilde \rho_k,y_{kb})
\le \tilde A_{k,k},\qquad \forall k\in\mathcal K.
\end{equation}

The harvested RF power at UEHR~$j$ is
\begin{equation}\label{eq:eh_uav}
P_j^{\mathrm{EH}}(\mathbf q)
=
\sum_{\ell=1}^{K}\big|c_{j,\ell}+x_{jb}\tilde D_{j,\ell}\big|^2,
\qquad \forall j\in\mathcal J,
\end{equation}
where
\(
x_{jb}\triangleq\big(d_j(\mathbf q)d_b(\mathbf q)\big)^{-\alpha/2}
\)
is the cascaded path-loss factor. The nonconvexity of \eqref{eq:eh_uav} stems from
the distance-dependent term $x_{jb}$.
To obtain a numerically stable SCA formulation, we eliminate $x_{jb}$ 
by expressing the bound in terms of $x_{jb}^2$ and using the relation 
$x_{jb}^2 \ge 1/y_{jb}$, as established in Proposition~\ref{prop:eh_inv_y_lb}.

\begin{Proposition}\label{prop:eh_inv_y_lb}
Consider \eqref{eq:eh_uav} 
and let $y_{jb}$ be defined as in \eqref{eq:yjb_uav}.
Define the constants
$U_j\triangleq \sum_{\ell=1}^{K}|c_{j,\ell}|^2$,
$S_j\triangleq \sum_{\ell=1}^{K}|\tilde D_{j,\ell}|^2$, and
$T_j\triangleq \sum_{\ell=1}^{K}\Re\!\left\{c_{j,\ell}^\ast \tilde D_{j,\ell}\right\}$.
Then, for any $\varepsilon_j>1/S_j$ (when $S_j>0$), the harvested power admits the
global lower bound
\begin{equation}\label{eq:eh_lb_inv_y_main}
P_j^{\mathrm{EH}}(\mathbf q)
\ \ge\
A_j(\varepsilon_j)+B_j(\varepsilon_j)\,\frac{1}{y_{jb}},
\qquad \forall j\in\mathcal J,
\end{equation}
where $A_j(\varepsilon_j)\triangleq U_j-\varepsilon_j T_j^2$, and $B_j(\varepsilon_j)\triangleq S_j-\frac{1}{\varepsilon_j}>0$.
\end{Proposition}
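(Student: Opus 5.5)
The plan is to expand the harvested power in \eqref{eq:eh_uav} as a quadratic in the scalar path-loss factor $x_{jb}$. Then a Young-type inequality removes the cross term, and the slack constraint \eqref{eq:yjb_uav} replaces $x_{jb}^2$ by $1/y_{jb}$.

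First, note that $x_{jb}=(d_j(\mathbf q)d_b(\mathbf q))^{-\alpha/2}$ is real and positive, because the distances are real and positive. Expanding each term gives
\begin{equation*}
|c_{j,\ell}+x_{jb}\tilde D_{j,\ell}|^2=|c_{j,\ell}|^2+2x_{jb}\,\Re\{c_{j,\ell}^\ast\tilde D_{j,\ell}\}+x_{jb}^2|\tilde D_{j,\ell}|^2 .
\end{equation*}
Summing over $\ell$ yields the exact identity $P_j^{\mathrm{EH}}(\mathbf q)=U_j+2T_jx_{jb}+S_jx_{jb}^2$, with $U_j,S_j,T_j$ as defined in Proposition~\ref{prop:eh_inv_y_lb}.

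Second, I bound the cross term, which may be negative, by a quadratic in $x_{jb}$. For any $\varepsilon_j>0$, the inequality $(\sqrt{\varepsilon_j}\,T_j+x_{jb}/\sqrt{\varepsilon_j})^2\ge 0$ gives
\begin{equation*}
2T_jx_{jb}\ge -\varepsilon_jT_j^2-\frac{x_{jb}^2}{\varepsilon_j}.
\end{equation*}
Hence $P_j^{\mathrm{EH}}\ge (U_j-\varepsilon_jT_j^2)+(S_j-1/\varepsilon_j)\,x_{jb}^2=A_j(\varepsilon_j)+B_j(\varepsilon_j)x_{jb}^2$. This bound holds globally in $\mathbf q$, since it uses no linearization point. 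The choice $\varepsilon_j>1/S_j$ makes $B_j(\varepsilon_j)>0$, and this sign is exactly what the next step needs.

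Third, I relate $x_{jb}^2$ to $y_{jb}$. Constraint \eqref{eq:yjb_uav} states $d_j d_b\le y_{jb}^{1/\alpha}$. Raising both nonnegative sides to the power $\alpha$ gives $(d_jd_b)^{\alpha}\le y_{jb}$, so $x_{jb}^2=(d_jd_b)^{-\alpha}\ge 1/y_{jb}$. Multiplying by $B_j(\varepsilon_j)>0$ preserves the direction of the inequality, which gives \eqref{eq:eh_lb_inv_y_main}.

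The only delicate point is the sign requirement: the substitution of $1/y_{jb}$ is valid only because $B_j>0$, which is why $\varepsilon_j>1/S_j$ is imposed. The proof should state explicitly that $x_{jb}$ is a real scalar, since this is what turns the expansion into a real quadratic. The rest is routine algebra.
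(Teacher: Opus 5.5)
Your proposal is correct and follows essentially the same route as the paper's proof in Appendix~\ref{app:eh_inv_y_lb}. You expand $P_j^{\mathrm{EH}}$ as the real quadratic $U_j+2T_jx_{jb}+S_jx_{jb}^2$, bound the cross term via $(\sqrt{\varepsilon_j}\,T_j+x_{jb}/\sqrt{\varepsilon_j})^2\ge 0$, and then use \eqref{eq:yjb_uav} to obtain $x_{jb}^2\ge 1/y_{jb}$; your explicit remark that multiplying by $B_j(\varepsilon_j)>0$ preserves the inequality states a step the paper leaves implicit.
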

\begin{proof}
    See Appendix \ref{app:eh_inv_y_lb}.
\end{proof}
Using \eqref{eq:eh_lb_inv_y_main}, the EH requirement can be enforced in SCA by
linearizing $1/y_{jb}$ around the previous iterate.
Since $h(y)=1/y$ is convex for $y>0$, its first-order Taylor approximation at
$y_{jb}^{(t-1)}$ is a global affine underestimator
\begin{equation}\label{eq:inv_taylor_app}
\frac{1}{y_{jb}}
\ge
\frac{1}{y_{jb}^{(t-1)}}-\frac{1}{\big(y_{jb}^{(t-1)}\big)^2}\big(y_{jb}-y_{jb}^{(t-1)}\big)
\triangleq \widetilde\psi_j^{(t)}(y_{jb}).
\end{equation}
Therefore, a sufficient convex inner approximation of the EH constraint \eqref{P2c} is
\begin{equation}\label{eq:eh_sca_final_app}
\sum_{j \in \mathcal{J}} \left ( A_j(\varepsilon_j)+B_j(\varepsilon_j)\,\widetilde\psi_j^{(t)}(y_{jb}) \right )
\ \ge\ \Omega^{-1}(E_h).
\end{equation}
\begin{Remark}
A simple and stable choice of $\varepsilon_j$ is $\varepsilon_j=\frac{2}{S_j}$ (for $S_j>0$), which
yields $B_j(\varepsilon_j)=S_j/2$ and $A_j(\varepsilon_j)=U_j-2T_j^2/S_j$.
\end{Remark}
At SCA iteration $t$, the UAV location is updated by solving
\begin{subequations}\label{P_uav_legit_sca}
\begin{align}
\max_{\boldsymbol{\tilde{\Xi}}}\quad & \zeta
\\
\text{s.t.}\quad
& \eqref{eq:uav_rate_aux_a},\ \eqref{eq:uav_rate_convex},\ \eqref{eq:y_safe_bound_uav}, \eqref{eq:y_safe_bound_uav2},\eqref{eq:eh_sca_final_app}, \eqref{P1h}.
\end{align}
\end{subequations}
Here the optimization variable set is
$\boldsymbol{\tilde{\Xi}}
\triangleq \{\mathbf q,\,\zeta,\,\tilde {\boldsymbol \rho},\,\mathbf y_b,\,\mathbf y_u \}$, where
$\tilde{\boldsymbol \rho} =\{\tilde \rho_k\}_{k\in\mathcal{K}}$,
$\mathbf{y}_b=\{y_{kb}\}_{k=1}^{K}$, and
$\mathbf{y}_u=\{y_{jb}\}_{j=1}^{J}$.
The proposed method for UAV location optimization is summarized in Algorithm \ref{Alg:uav_sca_legit}.  Problem \eqref{P_uav_legit_sca} is convex and can be efficiently solved using standard convex optimization solvers (e.g., CVX).
The linearization points
$\{\tilde \rho_k^{(t)},y_{kb}^{(t)},y_{jb}^{(t)}\}$ are updated iteratively after each  solve until convergence.

\begin{algorithm}[t]
\caption{SCA-Based UAV Location Optimization}
\label{Alg:uav_sca_legit}
\KwIn{Initial $\mathbf q^{(0)}\in\mathcal L$, tolerance $\varepsilon>0$.}
\KwOut{Optimized UAV location $\mathbf q^\star$.}
Set $t\gets 1$ and initialize $(\tilde{\boldsymbol\rho}^{(0)},\mathbf y_b^{(0)},\mathbf y_u^{(0)})$ from $\mathbf q^{(0)}$.\;
\Repeat{$|\zeta^{(t)}-\zeta^{(t-1)}|\le \varepsilon$}{
Construct $\widetilde{\Theta}^{(t)}(\cdot)$
 and $\widetilde{\psi}_j^{(t)}(\cdot)$ using
\eqref{eq:uav_agm_majorizer} and \eqref{eq:inv_taylor_app}.\;
Solve \eqref{P_uav_legit_sca} to obtain $(\mathbf q^{(t)},\zeta^{(t)})$.\;
Update  the linearization points and set $t\gets t+1$.\;
}
\Return $\mathbf q^{(t-1)}$.\;
\end{algorithm}

By alternately optimizing the power allocation, RIS reflection phases, and UAV
location, the benchmark problem in \eqref{P2} can be efficiently solved via a
BCD framework, where each block is handled using the
SCA-based procedures developed in
Sections~\ref{subsec:sca_power}, \ref{subsec:sca_ris}, and \ref{subsec:sca_uav}.
The complete procedure is summarized in Algorithm~\ref{alg:overall_zf}.

\begin{algorithm}[t]
\caption{BCD--SCA Benchmark for WCSEE}
\label{alg:overall_zf}
\DontPrintSemicolon
\LinesNumbered
\KwIn{Init. $\tilde{\mathbf p}^{(0)}\!\succeq\!0$, $\mathbf s^{(0)}$, $\mathbf q^{(0)}\!\in\!\mathcal L$; 
$\epsilon_{\rm out}$, $I_{\max}$}
\KwOut{$\tilde{\mathbf p}^\star$, $\mathbf s^\star$, $\mathbf q^\star$}

$i\!\gets\!0$, $\eta^{(0)}\!\gets\!0$.\;
\Repeat{$|\eta^{(i)}-\eta^{(i-1)}|\le\epsilon_{\rm out}$ \textbf{or} $i=I_{\max}$}{
Update $\mathbf H_c$ and compute ZF directions $\{\hat{\mathbf p}_k^{(i)}\}$.\;

$\tilde{\mathbf p}^{(i+1)}\!\leftarrow\!$ Alg.~\ref{Alg:pow_cc_sca}.\;
$\mathbf s^{(i+1)}\!\leftarrow\!$ Alg.~\ref{Alg:ris_sca}.\;
$\mathbf q^{(i+1)}\!\leftarrow\!$ Alg.~\ref{Alg:uav_sca_legit}.\;

Compute $R^{\rm sec,(i+1)}$ and update
$\eta^{(i+1)}\!=\!\frac{R^{\rm sec,(i+1)}}{\varrho\sum_k \tilde p_k^{(i+1)}+P_0}$.\;

$i\!\gets\!i+1$.\;
}
\Return $\tilde{\mathbf p}^{(i)}, \mathbf s^{(i)}, \mathbf q^{(i)}$.\;
\end{algorithm}

\section{SAC-Based Joint ZF Power Allocation, RIS Phase, and Location Optimization}\label{SAC-section}

Due to the nonconvexity of the WCSEE maximization problem in Section~\ref{Problem-formulation}, conventional alternating optimization and SCA-based methods may suffer from poor scalability and convergence to local optima.
To address these challenges, we adopt a SAC framework that learns control policies through interaction with the wireless environment. SAC effectively handles high-dimensional continuous action spaces, ensures stable learning via entropy regularization, and accommodates discrete RIS phase constraints through continuous relaxation and projection. 
In the following, the problem is first reformulated as a Markov decision process (MDP), after which the SAC-based solution is developed.

\subsection{MDP Model}
To enable a learning-based solution, the original optimization problem is reformulated as a sequential decision-making process. The joint design variables, including UAV location, RIS phase shifts, and transmit power, are optimized through interaction with the environment. At each time step, the agent observes the current network configuration (e.g., channel conditions and UAV position) and updates the control variables to improve system performance. This naturally leads to an MDP formulation, where the system evolution depends on the current state and action.

The problem is thus modeled as an MDP characterized by the tuple
$\langle\mathcal{S},\mathcal{A},\mathcal{P},\mathcal{R}\rangle$, where the state represents the system configuration, the action corresponds to the joint control of UAV location, RIS phase shifts, and transmit power, and the reward reflects the achieved WCSEE performance. At each step $t$, the agent observes $s_t\in\mathcal{S}$, selects an action $a_t\in\mathcal{A}$ according to a policy $\pi(a_t|s_t)$, and receives a reward $r_t=\mathcal{R}(s_t,a_t)$. The components are defined as follows.
\begin{itemize}
\item \textbf{Action:} The action includes ZF power allocation, RIS phase shifts, and UAV horizontal position:
\begin{equation}
a_t
=
\big[
\{\tilde{p}_k^{(t)}\}_{k=1}^{K},
\{\theta_m^{(t)}\}_{m=1}^{M},
\mathbf{q}^{(t)}
\big].
\end{equation}
The SAC agent outputs a continuous vector in $[-1,1]^{K+M+2}$ via a \emph{tanh} layer, which is mapped to feasible variables. Power is first scaled to $[0,1]$ and normalized to satisfy \eqref{P1d}:
\begin{equation}
\tilde p_k^{(t)}
=
P_{\max}\,
\frac{\hat p_k^{(t)}}{\sum_{\ell=1}^{K}\hat p_\ell^{(t)}},
\quad \forall k\in\mathcal K,
\end{equation}
with $\sum_{\ell=1}^{K}\hat p_\ell^{(t)}>0$. The RIS phases are mapped to $[0,2\pi)$ and quantized to $\mathcal F$, while $\mathbf q^{(t)}$ is projected onto $\mathcal L$.

\item \textbf{State:} The state captures channel and geometric information:
\begin{align}
s_t =
\big[
&
\mathfrak{Re}\{\mathbf{H}_{\mathrm{c}}\},\mathfrak{Im}\{\mathbf{H}_{\mathrm{c}}\},
\mathfrak{Re}\{\hat{\mathbf{u}}_j\}_{j=1}^J,\mathfrak{Im}\{\hat{\mathbf{u}}_j\}_{j=1}^J, \nonumber\\
&
x_r^{(t)},\;
y_r^{(t)}
\big].
\end{align}
User and UEHR locations remain fixed within each episode and vary across episodes.

\item \textbf{Reward:} Given $a_t$, the ZF precoder is constructed and the worst-case eavesdropping SINR $\gamma_{E,k}^{(t)}$ is computed via \eqref{eq:wc_eve_sinr}. The total power is
$P_{\mathrm{tot}}^{(t)}=\varrho \sum_{k=1}^{K}\tilde{p}_k^{(t)} + P_0$. The reward is
\begin{equation}
r_t
=
\begin{cases}
\dfrac{R^{\mathrm{sec},(t)}}{P_{\mathrm{tot}}^{(t)}}, & \text{if \eqref{P1c_closed} holds},\\
0, & \text{otherwise},
\end{cases}
\end{equation}
where $R^{\mathrm{sec},(t)}$ is defined in \eqref{eq:min_sec_rate}.  This feasibility-based reward design ensures that only constraint-satisfying actions contribute to learning, consistent with constraint-aware reward shaping widely adopted in the literature (see, e.g., \cite{Niyato,ZhangRSMA}).
\end{itemize}


\subsection{Fundamentals of the SAC Algorithm}

Unlike conventional DRL algorithms that only maximize the expected cumulative reward, the SAC algorithm augments the objective with an entropy term to encourage policy exploration and improve training stability. The optimal policy $\pi^\star$ is obtained by solving the maximum-entropy RL problem \cite{Haarnoja}:
\begin{equation}
\pi^\star
=
\arg\max_{\pi}
\mathbb{E}_{(s_t,a_t)\sim\rho_\pi}
\Bigg[
    \sum_{t=0}^{\infty}
    \big(
        r_t + \beta \mathcal{H}(\pi(\cdot|s_t))
    \big)
\Bigg],
\label{eq:sac_objective}
\end{equation}
where $\rho_\pi$ denotes the state--action visitation distribution under policy $\pi$, $\mathcal{H}(\pi(\cdot|s_t)) \triangleq -\mathbb{E}_{a_t\sim\pi(\cdot|s_t)}[\log\pi(a_t|s_t)]$ denotes the policy entropy, and $\beta>0$ is the temperature parameter that balances exploitation and exploration.

In SAC, two soft Q-functions $Q_{\phi_1}(s,a)$ and $Q_{\phi_2}(s,a)$, parameterized by neural networks with parameters $\phi_1$ and $\phi_2$, are employed to approximate the soft state--action value:
\begin{equation}
Q^\pi(s_t,a_t) =
\mathbb{E}
\big[
    r_t + \gamma\, V^\pi(s_{t+1})
\big],
\end{equation}
where  $\gamma \in [0,1]$ is the discount factor weighting future and
instant rewards and
the soft value function is given by
\begin{equation}
V^\pi(s_{t+1})
=
\mathbb{E}_{a_{t+1}\sim\pi}
\!\left[
    Q^\pi(s_{t+1},a_{t+1})
    -
    \beta\log\pi(a_{t+1}|s_{t+1})
\right].
\end{equation}
In practice, the soft value function is not explicitly parameterized and is
instead evaluated implicitly using the current policy and target Q-networks, in
accordance with the modern SAC formulation.
Given a replay buffer $\mathcal{D}$ containing transition tuples $(s_t,a_t,r_t,s_{t+1})$, the critic networks are trained by minimizing the following mean-squared soft Bellman error
\begin{equation} \label{Bellman error}
J_Q(\phi_i) =
\mathbb{E}_{(s_t,a_t,r_t,s_{t+1})\sim\mathcal{D}}
\Bigg[
\frac{1}{2}
\Big(
    Q_{\phi_i}(s_t,a_t)
    -
    y_t
\Big)^2
\Bigg],
\end{equation}
where $i\in\{1,2\}$ and
\begin{align}
y_t
=\, & r_t
+ \gamma\,\mathbb{E}_{a_{t+1}\sim\pi(\cdot|s_{t+1})}
\nonumber\\ &
\left[
    \min_{j=1,2} Q_{\bar{\phi}_j}(s_{t+1},a_{t+1})
    -
    \beta \log \pi(a_{t+1}|s_{t+1})
\right],
\end{align}
with $\bar{\phi}_j$ denoting the parameters of the corresponding target Q-networks.

The actor (policy) network $\pi_\psi(a|s)$, parameterized by $\psi$, is updated by minimizing
\begin{align}\label{policy update}
J_\pi(\psi)
=\, & \mathbb{E}_{s_t\sim\mathcal{D}}
\Big[
    \mathbb{E}_{a_t\sim\pi_\psi}
   \nonumber \\& \big(
        \beta\log\pi_\psi(a_t|s_t)
        - \min_{i=1,2}Q_{\phi_i}(s_t,a_t)
    \big)
\Big].
\end{align}
The temperature $\beta$ is tuned automatically by minimizing \cite{Haarnoja2}
\begin{equation} \label{temperature}
J(\beta) =
\mathbb{E}_{(s_t,a_t)\sim\mathcal{D}}
\Big[
    -\beta
    \big(
        \log\pi_\psi(a_t|s_t) + \bar{\mathcal{H}}
    \big)
\Big],
\end{equation}
where $\bar{\mathcal{H}}$ is a target entropy level, typically set as a negative constant proportional to the action dimension.

Beyond this practical perspective, SAC admits a useful interpretation based on probabilistic inference: the policy improvement step corresponds to projecting the policy distribution toward an energy--based target distribution. Formally, the optimal policy solves
\begin{equation}
J_{\pi}(\psi)
=
\mathbb{E}_{s_t\sim\mathcal{D}}
\!\left[
D_{\mathrm{KL}}
\!\left(
\pi_{\psi}(\cdot|s_t)
\,\middle\|
\,\frac{\exp\!\left(Q_{\theta}(s_t,\cdot)/\beta\right)}{
Z_{\theta}(s_t)}
\right)
\right],
\label{eq:kl_policy_update}
\end{equation}
where $D_{\mathrm{KL}}(\cdot\|\cdot)$ denotes the Kullback--Leibler divergence and $Z_{\theta}(s_t)$ is a normalizing partition function. This viewpoint reveals that SAC implicitly projects the policy toward a Boltzmann distribution induced by the critic values, which explains its stable convergence and robust exploration behaviour.


\subsection{SAC-Based Joint Optimization Framework}
Unlike deterministic actor--critic baselines, SAC updates the policy via
an entropy-regularized objective without a target actor network.
Fig.~\ref{SAC-BD} depicts the interaction between the SAC agent and the wireless
environment.  Moreover, Fig.~\ref{fig:sac_architecture} illustrates the architecture of the proposed SAC networks, where the actor outputs the mean and log-standard deviation of a Gaussian policy, followed by Gaussian sampling and $\tanh$ squashing, while each critic takes the concatenated state--action pair as input and outputs a scalar Q-value. The training protocol follows an off-policy paradigm with experience replay and mini-batch updates, where the actor, twin critics, and entropy temperature are iteratively optimized using sampled transitions.
At each step, the agent observes the state $s_t$ and samples
a continuous action $a_t\sim\pi_\psi(\cdot|s_t)$ that jointly controls the
power allocation, RIS phase-shift vector, and UAV horizontal position.
The critic employs twin Q-networks $Q_{\phi_1}$ and $Q_{\phi_2}$, and
the minimum Q-value is used to mitigate overestimation bias. Target critic
networks $\{Q_{\bar{\phi}_1},Q_{\bar{\phi}_2}\}$ are updated via Polyak
averaging to stabilize Bellman backups, while the entropy temperature
$\beta$ is automatically tuned to match a target entropy level.
The environment maps actions to feasible variables via power normalization,
RIS phase quantization, and location projection, and returns the reward
based on the hard EH constraint and the resulting WCSEE. Transitions are stored in a replay buffer and mini-batches are sampled
to update the actor, critics, and $\beta$ as summarized in
Algorithm~\ref{alg:SAC_RIS}. After training, $\pi_\psi$ enables real-time
joint control under different channel realizations and network geometries.
\begin{figure} 
    \centering
    \includegraphics[width=.9\linewidth]{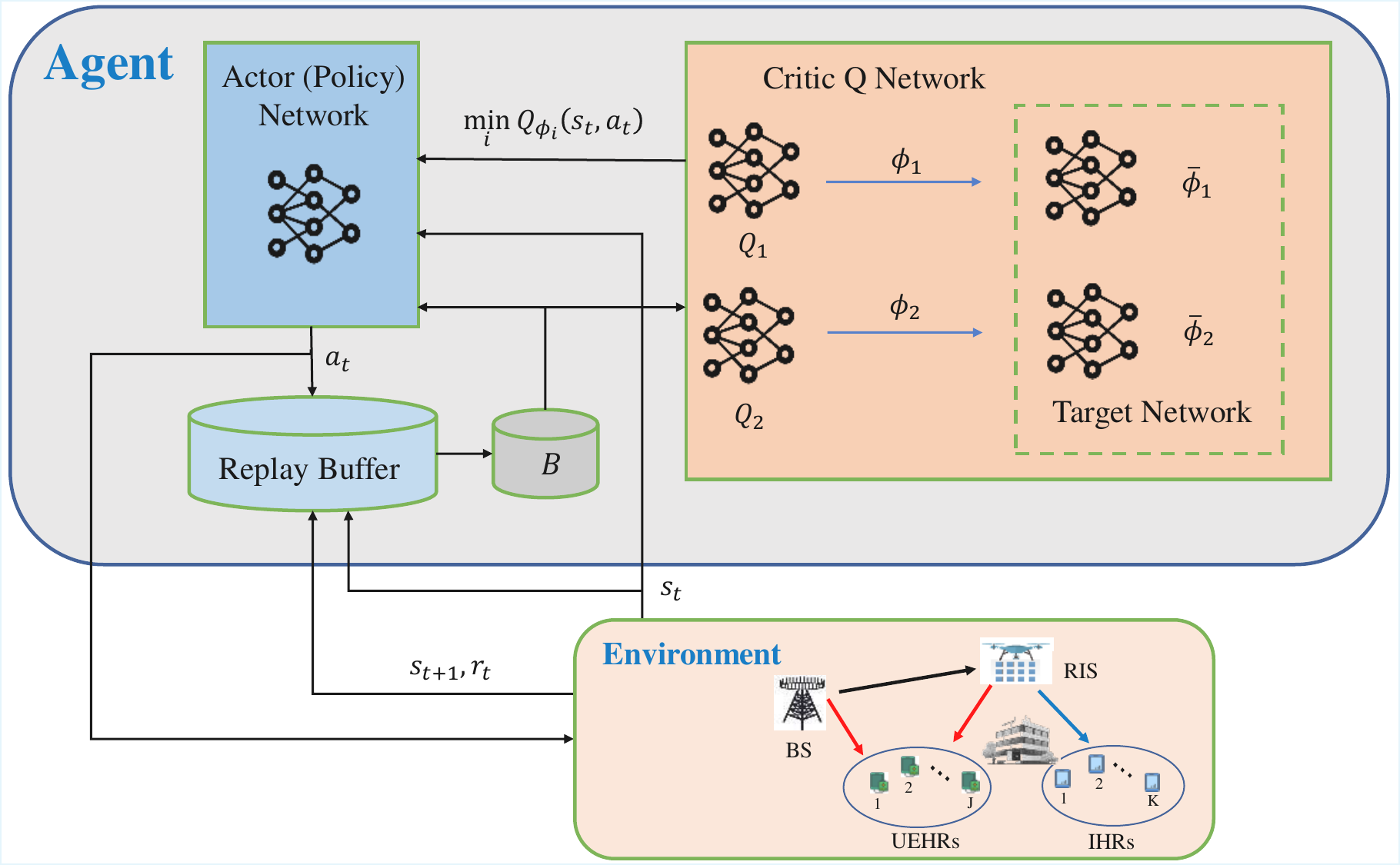}
    \caption{SAC framework for joint power allocation,
RIS phase optimization, and UAV positioning with twin critics and target
Q-networks.}
    \label{SAC-BD}		
\end{figure}

\begin{figure}[t]
\centering
\resizebox{\linewidth}{!}{%
\begin{tikzpicture}[
    font=\small,
    >=Latex,
    line width=0.45pt,
    node distance=0.55cm and 0.55cm,
    block/.style={draw, rectangle, minimum height=0.78cm, minimum width=1.38cm, align=center, fill=blue!8},
    block2/.style={draw, rectangle, minimum height=0.78cm, minimum width=1.30cm, align=center, fill=orange!10},
    block3/.style={draw, rectangle, minimum height=0.78cm, minimum width=1.38cm, align=center, fill=green!10},
    sumblock/.style={draw, rectangle, minimum height=0.78cm, minimum width=1.70cm, align=center, fill=yellow!15},
    group/.style={draw, dashed, rounded corners, inner sep=5pt},
    arr/.style={->, >=Latex, line width=0.45pt}
]

\node at (6,3.2) {\textbf{(a) Actor network (Gaussian policy)}};

\node[block]  (s)      at (0.0,1.25) {State\\$s_t$};
\node[block]  (a1)     at (1.75,1.25) {FC\\256};
\node[block2] (a2)     at (3.35,1.25) {ReLU};
\node[block]  (a3)     at (4.95,1.25) {FC\\256};
\node[block2] (a4)     at (6.55,1.25) {ReLU};

\node[sumblock] (sample) at (8.65,1.2) {Gaussian\\sampling};
\node[block3]   (mean)   at (8.65,2.28) {Mean\\head};
\node[block3]   (logstd) at (8.65,0.1) {Log-std\\head};

\node[block2] (tanh) at (10.45,1.25) {$\tanh$};
\node[block3] (act)  at (12.05,1.25) {Action\\$a_t$};

\draw[arr] (s) -- (a1);
\draw[arr] (a1) -- (a2);
\draw[arr] (a2) -- (a3);
\draw[arr] (a3) -- (a4);

\draw[arr] (a4.east) -- ++(0.5,0) |- (mean.west);
\draw[arr] (a4.east) -- ++(0.5,0) |- (logstd.west);

\draw[arr] (mean.south) -- (sample.north);
\draw[arr] (logstd.north) -- (sample.south);

\draw[arr] (sample) -- (tanh);
\draw[arr] (tanh) -- (act);

\node[group, fit=(s)(a1)(a2)(a3)(a4)(mean)(logstd)(sample)(tanh)(act)] {};

\node at (6,-0.75) {\textbf{(b) Twin critic networks}};

\node[block]    (cs)     at (0.0,-2.35) {State\\$s_t$};
\node[block3]   (ca)     at (1.65,-2.35) {Action\\$a_t$};
\node[sumblock] (concat) at (3.5,-2.35) {Concatenate\\$[s_t,a_t]$};

\node[block]  (q11)   at (5.60,-1.60) {FC\\512};
\node[block2] (q12)   at (7.20,-1.60) {ReLU};
\node[block]  (q13)   at (8.80,-1.60) {FC\\512};
\node[block2] (q14)   at (10.40,-1.60) {ReLU};
\node[block3] (q1out) at (12,-1.60) {$Q_1$};

\node[block]  (q21)   at (5.60,-3.10) {FC\\512};
\node[block2] (q22)   at (7.20,-3.10) {ReLU};
\node[block]  (q23)   at (8.80,-3.10) {FC\\512};
\node[block2] (q24)   at (10.40,-3.10) {ReLU};
\node[block3] (q2out) at (12,-3.10) {$Q_2$};

\draw[arr] (cs) -- (ca);
\draw[arr] (ca) -- (concat);

\coordinate (split) at ($(concat.east)+(0.22,0)$);
\draw[arr] (concat.east) -- (split);
\draw[arr] (split) |- (q11.west);
\draw[arr] (split) |- (q21.west);

\draw[arr] (q11) -- (q12);
\draw[arr] (q12) -- (q13);
\draw[arr] (q13) -- (q14);
\draw[arr] (q14) -- (q1out);

\draw[arr] (q21) -- (q22);
\draw[arr] (q22) -- (q23);
\draw[arr] (q23) -- (q24);
\draw[arr] (q24) -- (q2out);

\node[group, fit=(cs)(ca)(concat)(q11)(q12)(q13)(q14)(q1out)(q21)(q22)(q23)(q24)(q2out)] {};

\end{tikzpicture}%
}
\caption{Structure of the proposed SAC networks: (a) Gaussian actor network and (b) twin Q-critic networks.}
\label{fig:sac_architecture}
\end{figure}

\begin{algorithm}[t]
\caption{SAC-Based Joint Optimization}
\label{alg:SAC_RIS}
\DontPrintSemicolon
\SetKwInOut{Input}{Input}
\SetKwInOut{Output}{Output}
\Input{$\gamma,\lambda_Q,\lambda_\pi,\lambda_\beta,\bar{\mathcal H},\tau,B$}
\Output{Policy $\pi_\psi(a|s)$}
Initialize replay buffer $\mathcal D$, critics $Q_{\phi_1},Q_{\phi_2}$, targets $Q_{\bar\phi_1},Q_{\bar\phi_2}$ with $\bar\phi_i\!\leftarrow\!\phi_i$, actor $\pi_\psi$, and temperature $\beta$.\;
\For{each episode}{
Reset environment and observe $s_0$.\;
\For{each step $t$}{
Sample $a_t\!\sim\!\pi_\psi(\cdot|s_t)$, map it to $\{\tilde p_k^{(t)}\},\{\phi_m^{(t)}\},\mathbf q^{(t)}$, execute it, and obtain $(r_t,s_{t+1})$.\;
Store $(s_t,a_t,r_t,s_{t+1})$ in $\mathcal D$.\;
Sample a mini-batch $\{(s,a,r,s')\}$ from $\mathcal D$.\;
Update critics: $\phi_i\!\leftarrow\!\phi_i-\lambda_Q\nabla_{\phi_i}J_Q(\phi_i),~i=1,2$.\;
Update actor: $\psi\!\leftarrow\!\psi-\lambda_\pi\nabla_\psi J_\pi(\psi)$.\;
Update temperature: $\beta\!\leftarrow\!\beta-\lambda_\beta\nabla_\beta J(\beta)$.\;
Soft-update targets: $\bar\phi_i\!\leftarrow\!\tau\phi_i+(1-\tau)\bar\phi_i,~i=1,2$.\;
}
}
\end{algorithm}


\subsection{Computational Complexity}

The computational complexity of the proposed SAC-based framework consists of two components: i) environment interaction and reward computation, and ii) SAC network updates.

At each decision step, mapping the action to $\tilde{\mathbf p}^{(t)}$, $\boldsymbol{\theta}^{(t)}$, and $\mathbf q^{(t)}$ requires $O(K+M)$ operations. Constructing the cascaded channels for $K$ IHRs and $J$ UEHRs incurs $O\big((K+J)N_tM\big)$ complexity. The ZF precoder computation requires $O(N_tK^2+K^3)$, while secrecy-rate and eavesdropping evaluation costs $O(JKN_t+JK^2)$. Hence, the per-step environment complexity is
\begin{equation}
C_{\rm env}
=
O\big((K+J)N_tM + N_tK^2 + K^3 + JKN_t + JK^2\big).
\end{equation}

The SAC algorithm employs one actor and two critic networks. Let $P_\pi$ and $P_Q$ denote the number of parameters of the actor and each critic, respectively. For a minibatch of size $B$, each update requires
\begin{equation}
C_{\rm SAC}
=
O\big(B(2P_Q + P_\pi)\big).
\end{equation}
For an episode of $T$ steps, the overall training complexity is
\begin{equation}
O\!\left(T\big[C_{\rm env} + C_{\rm SAC}\big]\right).
\end{equation}
Despite the higher offline training cost, the online decision-making complexity is low, requiring only a single actor forward pass, i.e., $O(P_\pi)$, per step.

For the SCA--BCD benchmark, the complexity is governed by the outer BCD iterations and the internal iterations for solving the convex subproblems of power allocation, RIS phase optimization, and UAV location. Using a standard interior-point worst-case approximation, their complexities scale as $O\!\left(I_{\rm p}(K+2)^3\right)$, $O\!\left(I_{\rm r}(2M+2JK+2K+1)^3\right)$, and $O\!\left(I_{\rm q}(2K+J+3)^3\right)$, respectively, where $I_{\rm p}$, $I_{\rm r}$, and $I_{\rm q}$ denote the corresponding iteration numbers. Hence, the overall complexity of Algorithm~\ref{alg:overall_zf} is
\begin{align}
C_{\rm SCA}
=&
O\!\Big(
I_{\rm BCD}\big[
I_{\rm p}(K+2)^3
+ I_{\rm r}(2M+2JK+2K+1)^3 \nonumber \\&
+ I_{\rm q}(2K+J+3)^3
\big]
\Big),
\end{align}
where $I_{\rm BCD}$ denotes the number of outer BCD iterations. The cost of channel construction and ZF beamforming is omitted as it is dominated by the convex subproblem solves. Compared with the proposed SAC framework, the SCA--BCD method incurs significantly higher online complexity due to per-realization iterative optimization.


\section{Numerical Results}\label{Simulat}
This section evaluates the performance of the proposed SAC-based algorithm and compares it with SCA- and 
DRL-based benchmarks, including DDPG and TD3, under various system settings. The results demonstrate that the proposed SAC approach consistently achieves superior performance in terms of convergence behavior and average reward, highlighting its effectiveness in addressing the considered optimization problem.

\subsection{Simulation and Benchmark Settings}
Unless otherwise specified, simulations are conducted in a
$1500\times3000~\mathrm{m}^2$ area with the BS located at $(0,0)$ and a
UAV-mounted RIS initially positioned at $(1000,0,100)$~m. The UAV moves horizontally
within a $50\times50~\mathrm{m}^2$ region at a fixed altitude of $H=100$~m.
The BS employs $N_t=6$ antennas to serve $K=4$ legitimate users in the
presence of $J=3$ UEHRs, randomly distributed within circles of radius
$500$~m. The RIS consists of $M=10$ elements with discrete phase resolution
$L=8$.
Unless stated otherwise, $P_{\mathrm{max}}=10$~dBm, $\sigma^2=10^{-3}$,
$P_0=1$~W, and 
$\alpha_b = \alpha_k = \alpha_j = \alpha_{bj} =2.5$. Both perfect and ICSI are considered;
for ICSI, the uncertainty radius is $\nu\in[0.01,0.1]$, and the minimum
harvested energy requirement is $E_h\in[0.005,0.02]$.
For SCA, results are averaged over $100$ channel realizations with tolerance
$\epsilon=0.01$. For DRL-based methods, each run consists of
$T=2\times10^4$ interaction steps, and results are averaged over multiple
random seeds. The main hyperparameters are summarized in
Table~\ref{tab:drl_hyper}.

\begin{table}[t]
\centering
\caption{Hyperparameters for DRL-Based Algorithms}
\label{tab:drl_hyper}
\begin{tabular}{lll}
\hline
\textbf{Parameter} & \textbf{Description} & \textbf{Value} \\
\hline
Loss & Critic loss function & MSE \\
$\gamma$ & Discount factor & $0.99$ \\
$\tau$ & Target-network soft update factor & $5\times10^{-3}$ \\
$B$ & Mini-batch size & $256$ \\
$D$ & Replay buffer size & $10^5$ \\
$T$ & Training steps per episode & $2\times10^4$ \\
Activation & Hidden-layer activation & ReLU \\
Hidden layers & Number of hidden layers & $2$ \\
Hidden units & Units per hidden layer & $256$ \\
$\lambda_{\pi}$ & Actor learning rate & $10^{-4}$ \\
$\lambda_{Q}$ & Critic learning rate & $10^{-3}$ \\
$\lambda_{\beta}$ & Temperature learning rate & $10^{-3}$ (SAC) \\
\hline
\end{tabular}
\end{table}

\subsection{Performance in the Ideal Case}
We first evaluate the performance of the SCA, DDPG,  TD3
 and the
proposed SAC algorithms under an ideal setting with PCSI and continuous RIS phase
shifts, which serves as a performance benchmark under ideal conditions.

\begin{figure}

    \centering
    \includegraphics[width=.9\linewidth]{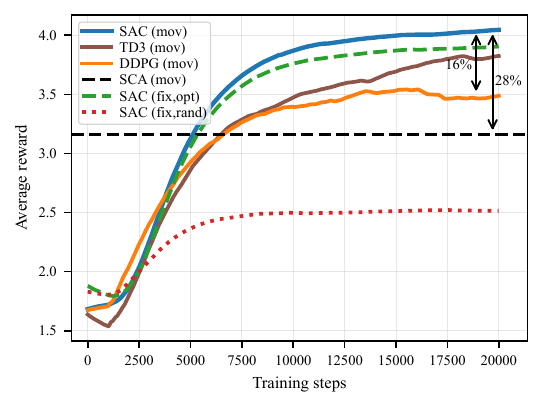}
    \caption{Convergence behavior under the ideal setting at
    $P_{\mathrm{max}}=10$~dBm.}
    \label{fig1-sim}
\end{figure}

As shown in Fig.~\ref{fig1-sim}, the proposed SAC-based method with a movable
RIS achieves the highest total reward and the fastest convergence. In
particular, SAC consistently outperforms TD3, and further improves upon DDPG,
achieving about $16\%$ higher reward at convergence. Compared with the SCA
benchmark, the performance gain increases to approximately $28\%$.
This improvement is attributed to SAC’s ability to jointly optimize RIS phases,
UAV location, and transmit power within a unified learning framework, whereas
SCA relies on sequential updates that may lead to suboptimal solutions.
The figure also includes two additional benchmarks with a fixed RIS:
(i) \textit{SAC (fix,opt)} with optimized RIS phases and (ii) \textit{SAC (fix,rand)} with random phases. The
results highlight the importance of RIS phase optimization, as random phases
significantly degrade performance due to the loss of control over the
cascaded BS--RIS--IHR/UEHR links.

Fig.~\ref{fig3-sim} depicts SEE versus the transmit power
\(P_{\mathrm{max}}\) for the SCA, DDPG, and the proposed SAC-based schemes.
As expected, increasing \(P_{\mathrm{max}}\) improves SEE for all methods
due to higher achievable secrecy rates. Nevertheless, the SAC-based approach
consistently outperforms both SCA and DDPG across the entire power range.
At high transmit power levels, SCA becomes comparable to DDPG; however,
both remain inferior to SAC.
\begin{figure}[t]
    \centering
\includegraphics[width=.9\linewidth]{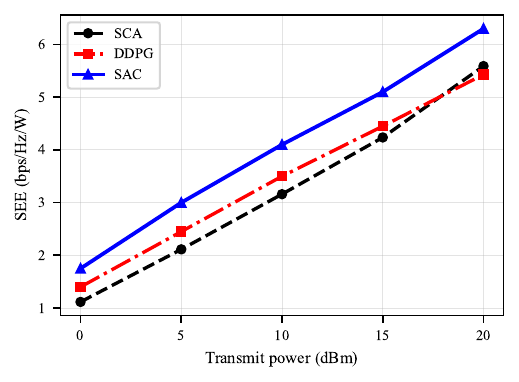}
    \caption{SEE versus transmit power for SCA, DDPG, and the proposed SAC-based scheme.}
    \label{fig3-sim}
\end{figure}
Fig.~\ref{fig:SEE_vs_M} illustrates the impact of the number of RIS elements $M$
on SEE under the ideal PCSI setting. SEE increases monotonically with $M$ for
all schemes due to the enhanced passive beamforming gain and stronger cascaded
links. The proposed SAC-based approach consistently achieves the highest SEE
across the entire range of $M$, highlighting its effectiveness in jointly
optimizing RIS phase shifts, UAV location, and transmit power. In contrast,
the SCA-based benchmark improves more rapidly at larger $M$ due to increased
optimization accuracy under PCSI, while DDPG saturates earlier, indicating
limited adaptability to the enlarged action space. Overall, the results confirm
the superior scalability and robustness of SAC with increasing RIS size.

\begin{figure}[t]
    \centering
\includegraphics[width=.9\linewidth]{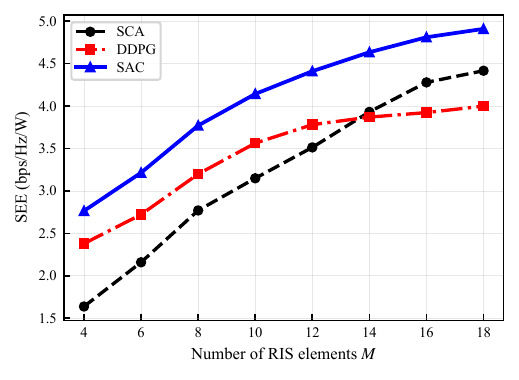}
    \caption{SEE versus the number of RIS elements $M$ at $P_{\mathrm{max}}=10$~dBm for different schemes.}
    \label{fig:SEE_vs_M}
\end{figure}


\subsection{Performance in the Non-Ideal Case}
We next evaluate a non-ideal setting with ICSI and discrete RIS
phases ($L=8$), where $E_h=0.02$ and $P_{\mathrm{max}}=10$~dBm.

\begin{figure}[t]

    \centering
    \includegraphics[width=.9\linewidth]{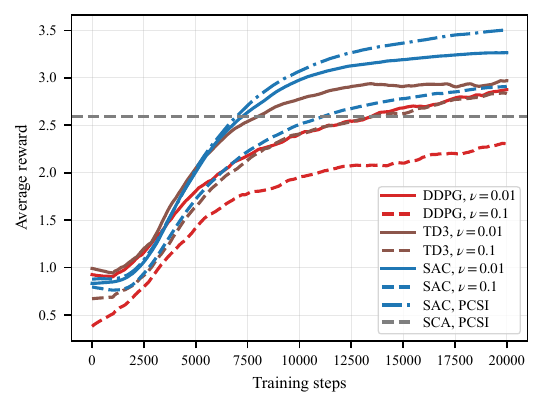}
    \caption{Convergence comparison under ICSI with discrete RIS phases
    ($L=8$) and $E_h=0.02$ at $P_{\mathrm{max}}=10$~dBm.}
    \label{fig2-sim}
\end{figure}

Fig.~\ref{fig2-sim} compares SAC, TD3, and DDPG under different CSI uncertainty radii
$\nu\in\{0.01,0.1\}$, with PCSI curves and the SCA baseline shown for reference.
As $\nu$ increases, the achievable reward decreases due to the more conservative
robust design. Nevertheless, SAC consistently outperforms both TD3 and DDPG and
exhibits faster convergence.
Notably, SAC with $\nu=0.1$ still surpasses DDPG with $\nu=0.01$,
demonstrating strong robustness under severe CSI uncertainty. This gain is attributed to the entropy-regularized policy and twin-critic structure of SAC, which improve exploration and reduce value overestimation under CSI uncertainty.

Finally, Fig.~\ref{fig_BS} compares the learning performance of SAC and DDPG under different mini-batch sizes $B$ with $\nu=0.01$, $P_{\max}=10$~dBm, and $E_h=0.01$. It is observed that SAC consistently achieves higher average rewards and faster convergence than DDPG across all batch sizes. Moreover, increasing $B$ results in smoother learning curves and improved convergence stability for both methods, as larger batches provide more accurate gradient estimates and reduce update variance. Nevertheless, even with small mini-batch sizes, SAC demonstrates superior robustness and stability, highlighting its stronger sample efficiency and suitability for robust UAV--RIS optimization.
\begin{figure}[t]
    \centering
\includegraphics[width=.9\linewidth]{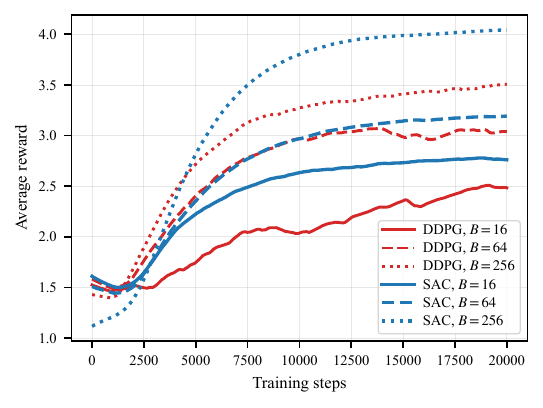}
    \caption{Convergence comparison of SAC and DDPG under different mini-batch sizes $B$ with $\nu=0.01$, $P_{\max}=10$~dBm, and $E_h=0.01$.}
    \label{fig_BS}
\end{figure}

\section{Conclusions and Future Work}\label{conc}
This paper investigated secure downlink transmission in a UAV-mounted RIS-assisted multiuser MISO system with UEHRs. A WCSEE maximization problem was formulated under imperfect CSI, discrete RIS phase constraints, and joint optimization of transmit power, RIS configuration, and UAV location.
To address this challenging nonconvex problem, a SAC-based DRL framework was proposed, while an SCA-based BCD method was developed as a benchmark under ideal assumptions. Simulation results demonstrated that the proposed SAC approach achieves notable SEE gains (up to $28\%$ over SCA and $16\%$ over DDPG), while providing faster convergence and strong robustness to CSI uncertainty across different system configurations.

Future work will extend the framework to general models with UAV mobility, real-time trajectory design, and multi-UAV scenarios under time-varying CSI, while also accounting for practical impairments such as channel estimation errors, quantization, and feedback delay.

\ 
\appendices
\numberwithin{equation}{section}
\section{Proof of Proposition~\ref{prop:ris_zf_affine}}
\label{app:proof_prop_ris_zf}

We outline the derivations of \eqref{eq:ris_zf_affine_a}--\eqref{eq:ris_zf_affine_e} as follows.

The legitimate SINR constraint in \eqref{eq:aux_sinrs1} can be expressed (after introducing $\rho_k$) as
\begin{equation}\label{eq:A1_legit}
\rho_k \le
\frac{\big|\mathbf{t}_{kk}^H\mathbf{s}\big|^2}{
\sum_{\ell\neq k}\big|\mathbf{t}_{k\ell}^H\mathbf{s}\big|^2+\sigma^2},
\quad \forall k\in\mathcal{K}.
\end{equation}
Rearranging \eqref{eq:A1_legit} yields the DC form
\begin{equation}\label{eq:A2_legit_dc}
\mathcal{M}_k(\mathbf{s})-\mathcal{N}_k(\mathbf{s},\rho_k)\le 0,
\end{equation}
where
$\mathcal{M}_k(\mathbf{s})\triangleq \sum_{\ell\neq k}\big|\mathbf{t}_{k\ell}^H\mathbf{s}\big|^2+\sigma^2$
and
$\mathcal{N}_k(\mathbf{s},\rho_k)\triangleq \frac{\big|\mathbf{t}_{kk}^H\mathbf{s}\big|^2}{\rho_k}$.
The nonconvexity is due to $\mathcal{N}_k(\mathbf{s},\rho_k)$.
Thus, at iteration $t$, we replace $\mathcal{N}_k(\mathbf{s},\rho_k)$ by its
first-order Taylor approximation $\Psi^{(t)}(\mathbf{s},\rho_k;\mathbf{t}_{kk})$,
which directly gives \eqref{eq:ris_zf_affine_a}.

For UEHR $j$ decoding user $k$, introduce auxiliary variables $\rho_{E,j,k}$ and
$\xi_{j,k}$ such that
\begin{equation}\label{eq:A3_eve_aux}
\begin{cases}
\big|c_{j,k}+\mathbf{t}_{j,k}^H\mathbf{s}\big|^2 \le \xi_{j,k}\rho_{E,j,k},\\[0.5mm]
\xi_{j,k} \le \sum_{\ell\neq k}\big|c_{j,\ell}+\mathbf{t}_{j,\ell}^H\mathbf{s}\big|^2+\sigma^2.
\end{cases}
\end{equation}
The product $\xi_{j,k}\rho_{E,j,k}$ in \eqref{eq:A3_eve_aux} is bilinear and nonconvex;
we apply the standard affine lower bound $\varTheta^{(t)}(\xi_{j,k},\rho_{E,j,k})$,
which yields \eqref{eq:ris_zf_affine_b}.
Moreover, each quadratic term $\big|c_{j,\ell}+\mathbf{t}_{j,\ell}^H\mathbf{s}\big|^2$ in the second inequality
of \eqref{eq:A3_eve_aux} is convex; thus, its first-order Taylor expansion provides a global lower bound.
Accordingly, we construct a convex surrogate constraint via inner approximation, leading to \eqref{eq:ris_zf_affine_c}.

Constraint \eqref{eq:ris_zf_affine_d} follows by rewriting
$f_{E,k}\ge \log_2(1+\rho_{E,j,k})$ into
$2^{f_{E,k}} \ge 1+\rho_{E,j,k}$ and then applying the first-order Taylor
approximation $\Gamma^{(t)}(f_{E,k})$ of $2^{f_{E,k}}$.
Finally, \eqref{eq:ris_zf_affine_e} is the epigraph form of the secrecy constraint.
This completes the proof.


\section{Proof of Proposition~\ref{prop:eh_inv_y_lb}}\label{app:eh_inv_y_lb}
For fixed $\{c_{j,\ell},\tilde D_{j,\ell}\}$ and real $x_{jb}\ge 0$,
expanding \eqref{eq:eh_uav} yields the convex quadratic
\begin{equation}\label{eq:gj_expand_app}
P_j^{\mathrm{EH}}(x_{jb})=U_j+S_jx_{jb}^2+2T_jx_{jb},
\end{equation}
where $U_j$, $S_j$, and $T_j$ are defined in Proposition~\ref{prop:eh_inv_y_lb}.
Next, we lower bound the linear term $2T_jx_{jb}$.
\begin{Lemma}\label{lem:young_lb_app}
For any $x_{jb}\in\mathbb{R}$ and any $\varepsilon_j>0$,
\begin{equation}\label{eq:young_lb_app}
2T_j x_{jb} \ \ge\ -\varepsilon_j T_j^2 - \frac{1}{\varepsilon_j}x_{jb}^2 .
\end{equation}
\end{Lemma}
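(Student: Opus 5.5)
The plan is to obtain \eqref{eq:young_lb_app} from the nonnegativity of a single perfect square, which is Young's inequality in its simplest form. Since $T_j$ is a real scalar (it is the sum of real parts defined in Proposition~\ref{prop:eh_inv_y_lb}) and $x_{jb}\in\mathbb{R}$, every quantity involved is real. For $\varepsilon_j>0$, the natural square to use is
\begin{equation*}
\Big(\sqrt{\varepsilon_j}\,T_j+\tfrac{1}{\sqrt{\varepsilon_j}}\,x_{jb}\Big)^2\ \ge\ 0 .
\end{equation*}

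Expanding this square gives $\varepsilon_j T_j^2 + 2T_jx_{jb} + \tfrac{1}{\varepsilon_j}x_{jb}^2 \ge 0$. The cross term is exactly $2T_jx_{jb}$, because the two factors $\sqrt{\varepsilon_j}$ and $1/\sqrt{\varepsilon_j}$ cancel. Moving the two squared terms to the right-hand side yields \eqref{eq:young_lb_app} directly. One could also phrase this as the AM--GM inequality $|2T_jx_{jb}|\le \varepsilon_j T_j^2+x_{jb}^2/\varepsilon_j$ and then use $2T_jx_{jb}\ge -|2T_jx_{jb}|$, but the single-square argument is shorter and avoids treating the sign of $T_jx_{jb}$ separately.

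There is no real obstacle. The only points to check are that $\varepsilon_j>0$, so that $\sqrt{\varepsilon_j}$ is well defined, and that the sign inside the square is chosen as $+$ so that the cross term comes out as $+2T_jx_{jb}$. The lemma places no restriction on the sign of $x_{jb}$ or $T_j$.

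For the downstream use, substituting \eqref{eq:young_lb_app} into \eqref{eq:gj_expand_app} gives
\begin{equation*}
P_j^{\mathrm{EH}}\ \ge\ U_j-\varepsilon_jT_j^2+\big(S_j-\tfrac{1}{\varepsilon_j}\big)x_{jb}^2 .
\end{equation*}
The coefficient $S_j-1/\varepsilon_j$ is positive when $\varepsilon_j>1/S_j$. Combined with $x_{jb}^2\ge 1/y_{jb}$, which comes from \eqref{eq:yjb_uav}, this gives Proposition~\ref{prop:eh_inv_y_lb}.
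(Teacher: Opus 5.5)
Your proof is correct and is essentially the paper's argument: the paper sets $a=\sqrt{\varepsilon_j}\,T_j$ and $b=x_{jb}/\sqrt{\varepsilon_j}$ and uses $(a+b)^2\ge 0$ to obtain $2ab\ge -a^2-b^2$, which is exactly your single-square expansion. Your side remarks also match how the paper uses the lemma downstream, namely that $T_j$ is real and that the substitution requires $S_j-1/\varepsilon_j>0$.
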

\begin{proof}
Let $a=\sqrt{\varepsilon_j}\,T_j$ and $b=x_{jb}/\sqrt{\varepsilon_j}$.
Since $(a+b)^2\ge 0$, we have $2ab\ge -a^2-b^2$, i.e.,
$2T_jx_{jb} \ge -\varepsilon_jT_j^2 - x_{jb}^2/\varepsilon_j$, which proves \eqref{eq:young_lb_app}.
\end{proof}
Applying Lemma~\ref{lem:young_lb_app} to \eqref{eq:gj_expand_app} yields, for any
$\varepsilon_j>0$,
\begin{equation}\label{eq:gj_lb_x2_app}
P_j^{\mathrm{EH}}\ge \big(U_j-\varepsilon_jT_j^2\big)+\Big(S_j-\frac{1}{\varepsilon_j}\Big)x_{jb}^2
= A_j(\varepsilon_j)+B_j(\varepsilon_j)x_{jb}^2.
\end{equation}
Choosing $\varepsilon_j>1/S_j$ (for $S_j>0$) ensures $B_j(\varepsilon_j)>0$.
Finally, from \eqref{eq:yjb_uav} and
$x_{jb}=\big(d_j^2(\mathbf q)d_b^2(\mathbf q)\big)^{-\alpha/4}$, we obtain
\begin{equation}\label{eq:x2_vs_inv_y_app}
x_{jb}^2=\big(d_j^2(\mathbf q)d_b^2(\mathbf q)\big)^{-\alpha/2}
\ge \frac{1}{y_{jb}}.
\end{equation}
Substituting \eqref{eq:x2_vs_inv_y_app} into \eqref{eq:gj_lb_x2_app} 
yields
\(
P_j^{\mathrm{EH}}\ge A_j(\varepsilon_j)+B_j(\varepsilon_j)\,1/y_{jb},
\)
which coincides with \eqref{eq:eh_lb_inv_y_main} and completes the proof.

\bibliographystyle{IEEEtran}
\bibliography{reference}

\end{document}